\documentclass[12pt,cls,onecolumn]{IEEEtran}
\usepackage{subfigure,cite,graphicx,psfig,amsmath,amssymb,mathrsfs,epsf}

\begin{document}

\title{On the Order Optimality of Large-scale Underwater Networks}
\author{\large Won-Yong Shin, \emph{Member}, \emph{IEEE}, Daniel E. Lucani, \emph{Member}, \emph{IEEE}, \\Muriel M{\'e}dard,
\emph{Fellow}, \emph{IEEE}, Milica Stojanovic,
\emph{Fellow}, \emph{IEEE},\\and Vahid Tarokh, \emph{Fellow}, \emph{IEEE} \\
\thanks{This work was supported in part by the National Science Foundation under grants No. 0520075, 0831728 and
CNS-0627021, and by the ONR MURI grant No. N00014-07-1-0738,
subcontract \# 060786 issued by BAE Systems National Security
Solutions, Inc., and supported by the Defense Advanced Research
Projects Agency (DARPA) and the Space and Naval Warfare System
Center (SPAWARSYSCEN), San Diego under Contract No. N66001-06-C-2020
(CBMANET), by the National Science Foundation under grant No.
0831728, and by the ONR grant No. N00014-09-1-0700. The material in
this paper was presented in part at the IEEE International Symposium
on Information Theory, Austin, TX, June 2010 and was submitted in
part to the 2011 IEEE International Symposium on Information
Theory.}
\thanks{W.-Y. Shin and V. Tarokh are with the School of Engineering and Applied Sciences, Harvard
University, Cambridge, MA 02138 USA (E-mail:\{wyshin,
vahid\}@seas.harvard.edu).}
\thanks{D. E. Lucani was with the Research Laboratory of Electronics, Massachusetts Institute of Technology,
Cambridge, MA 02139 USA. He is now with the Instituto de
Telecomunicacoes, Universidade do Porto, Porto 4200-465, Portugal
(E-mail: dlucani@fe.up.pt).}
\thanks{M. M{\'e}dard is with the Research Laboratory of Electronics, Massachusetts Institute of Technology,
Cambridge, MA 02139 USA (E-mail: medard@mit.edu).}
\thanks{M. Stojanovic is with the ECE Department, Northeastern University, Boston, MA 02115 USA (E-mail: millitsa@mit.edu).}
        } \maketitle


\markboth{Submitted to IEEE Transactions on Information Theory}
{Shin {\em et al.}: On the Order Optimality of Large-scale
Underwater Networks}


\newtheorem{definition}{Definition}
\newtheorem{theorem}{Theorem}
\newtheorem{lemma}{Lemma}
\newtheorem{example}{Example}
\newtheorem{corollary}{Corollary}
\newtheorem{proposition}{Proposition}
\newtheorem{conjecture}{Conjecture}
\newtheorem{remark}{Remark}

\def \diag{\operatornamewithlimits{diag}}
\def \min{\operatornamewithlimits{min}}
\def \max{\operatornamewithlimits{max}}
\def \log{\operatorname{log}}
\def \max{\operatorname{max}}
\def \rank{\operatorname{rank}}
\def \out{\operatorname{out}}
\def \exp{\operatorname{exp}}
\def \arg{\operatorname{arg}}
\def \E{\operatorname{E}}
\def \tr{\operatorname{tr}}
\def \SNR{\operatorname{SNR}}
\def \dB{\operatorname{dB}}
\def \ln{\operatorname{ln}}

\def \be {\begin{eqnarray}}
\def \ee {\end{eqnarray}}
\def \ben {\begin{eqnarray*}}
\def \een {\end{eqnarray*}}

\begin{abstract}
Capacity scaling laws are analyzed in an underwater acoustic network
with $n$ regularly located nodes on a square, in which both
bandwidth and received signal power can be limited significantly. A
narrow-band model is assumed where the carrier frequency is allowed
to scale as a function of $n$. In the network, we characterize an
attenuation parameter that depends on the frequency scaling as well
as the transmission distance. Cut-set upper bounds on the throughput
scaling are then derived in both extended and dense networks having
unit node density and unit area, respectively. It is first analyzed
that under extended networks, the upper bound is inversely
proportional to the attenuation parameter, thus resulting in a
highly power-limited network. Interestingly, it is seen that the
upper bound for extended networks is intrinsically related to the
attenuation parameter but not the spreading factor. On the other
hand, in dense networks, we show that there exists either a
bandwidth or power limitation, or both, according to the path-loss
attenuation regimes, thus yielding the upper bound that has three
fundamentally different operating regimes. Furthermore, we describe
an achievable scheme based on the simple nearest-neighbor multi-hop
(MH) transmission. We show that under extended networks, the MH
scheme is order-optimal for all the operating regimes. An
achievability result is also presented in dense networks, where the
operating regimes that guarantee the order optimality are
identified. It thus turns out that frequency scaling is instrumental
towards achieving the order optimality in the regimes. Finally,
these scaling results are extended to a random network realization.
As a result, vital information for fundamental limits of a variety
of underwater network scenarios is provided by showing capacity
scaling laws.
\end{abstract}

\begin{keywords}
Achievability, attenuation parameter, bandwidth, capacity scaling
law, carrier frequency, cut-set upper bound, dense network, extended
network, multi-hop (MH), operating regime, path-loss attenuation
regime, power-limited, underwater acoustic network.
\end{keywords}

\newpage


\section{Introduction}

Gupta and Kumar's pioneering work~\cite{GuptaKumar:00} characterized
the connection between the number $n$ of nodes and the sum
throughput in a large-scale wireless radio network. They showed that
the total throughput scales as $\Theta(\sqrt{n/\log n})$ when a
multi-hop (MH) routing strategy is used for $n$ source--destination
(S--D) pairs randomly distributed in a unit area.\footnote{We use
the following notation: i) $f(x)=O(g(x))$ means that there exist
constants $C$ and $c$ such that $f(x)\le Cg(x)$ for all $x>c$. ii)
$f(x)=o(g(x))$ means that
$\underset{x\rightarrow\infty}\lim\frac{f(x)}{g(x)}=0$. iii)
$f(x)=\Omega(g(x))$ if $g(x)=O(f(x))$. iv) $f(x)=\omega(g(x))$ if
$g(x)=o(f(x))$. v) $f(x)=\Theta(g(x))$ if $f(x)=O(g(x))$ and
$g(x)=O(f(x))$~\cite{Knuth:76}.} MH schemes are then further
developed and analyzed
in~\cite{GuptaKumar:03,FranceschettiDouseTseThiran:07,XueXieKumar:05,ElGamalMammenPrabhakarShah:06,ElGamalMammen:06,NebatCruzBhardwaj:09,
ShinChungLee:09}, while their throughput per S--D pair scales far
slower than $\Theta(1)$. Recent
results~\cite{OzgurLevequeTse:07,NiesenGuptaShah:09} have shown that
an almost linear throughput in the radio network, i.e.
$\Theta(n^{1-\epsilon})$ for an arbitrarily small $\epsilon>0$,
which is the best we can hope for, is achievable by using a
hierarchical cooperation (HC) strategy.\footnote{Note that the HC
scheme deals with a subtle issue around quantization, which is not
our main concern in this work.} Besides the schemes
in~\cite{OzgurLevequeTse:07,NiesenGuptaShah:09}, there have been
other studies to improve the throughput of wireless radio networks
up to a linear scaling in a variety of network scenarios by using
novel techniques such as networks with node
mobility~\cite{GrossglauserTse:02}, interference
alignment~\cite{CadambeJafar:07}, and infrastructure
support~\cite{ZemlianovVeciana:05}.

Together with the studies in terrestrial radio networks, the
interest in study of underwater networks has been growing, due to
recent advances in acoustic communication
technology~\cite{PartanKuroseLevine:06,Stojanovic:07,LucaniMedardStojanovicJSAC:08,LucaniStojanovicMedardOcean:08}.
In underwater acoustic communication systems, both bandwidth and
received signal power are severely limited owing to the exponential
(rather than polynomial) path-loss attenuation with long propagation
distance and the frequency-dependent attenuation. This is a main
feature that distinguishes underwater systems from wireless radio
links. Hence, the system throughput is affected by not only the
transmission distance but also the useful bandwidth. Based on these
characteristics, network coding
schemes~\cite{GuoXieCuiWang:06,LucaniMedardStojanovic:07,LucaniMedardStojanovicJSAC:08}
have been presented for underwater acoustic channels, while network
coding showed better performance than MH routing in terms of
reducing transmission power. MH networking has further been
investigated in other simple but realistic network conditions that
take into account the practical issues of coding and
delay~\cite{CarbonelliMitra:06,ZhangStojanovicMitra:10}.

One natural question is what are the fundamental capabilities of
underwater networks in supporting a multiplicity of nodes that wish
to communicate concurrently with each other, i.e., multiple S--D
pairs, over an acoustic channel. To answer this question, the
throughput scaling for underwater networks was first
studied~\cite{LucaniMedardStojanovic:08}, where $n$ nodes were
arbitrarily located in a planar disk of unit area, as
in~\cite{GuptaKumar:00}, and the carrier frequency was set to a
constant independent of $n$. That work showed an upper bound on the
throughput of each node, based on the physical
model~\cite{GuptaKumar:00}, which scales as
$n^{-1/\alpha}e^{-W_0(\Theta(n^{-1/\alpha}))}$, where $\alpha$
corresponds to the spreading factor of the underwater channel and
$W_0$ represents the branch zero of the Lambert W
function~\cite{Chapeau-BlondeauMonir:02}.\footnote{The Lambert W
function is defined to be the inverse of the function
$z=W(z)e^{W(z)}$ and the branch satisfying $W(z)\ge-1$ is denoted by
$W_0(z)$.} Since the spreading factor typically has values in the
range $1 \leq \alpha \leq 2$~\cite{LucaniMedardStojanovic:08}, the
throughput per node decreases almost as $O(n^{-1/\alpha})$ for large
enough $n$, which is considerably faster than the $\Theta(\sqrt{n})$
scaling characterized for wireless radio
settings~\cite{GuptaKumar:00}.

In this paper, capacity scaling laws for underwater networks are
analyzed in two fundamental but different networks: extended
networks~\cite{XieKumar:04,JovicicViswanathKulkarni:04,XueXieKumar:05,FranceschettiDouseTseThiran:07,OzgurLevequeTse:07}
of unit node density and dense
networks~\cite{GuptaKumar:00,OzgurLevequeTse:07,ElGamalMammenPrabhakarShah:06}
of unit area.\footnote{Since the two networks represent both extreme
network realizations, a realistic one would be in-between. In
wireless radio networks, the work in~\cite{OzgurJohariTseLeveque:10}
generalized the results of~\cite{OzgurLevequeTse:07} to the case
where the network area can scale polynomially with the number $n$ of
nodes. In underwater networks, we leave this issue for further
study.} Unlike the work in~\cite{LucaniMedardStojanovic:08}, the
{\em information-theoretic} notion of network capacity is adopted in
terms of characterizing the model for successful transmission.
Especially, we are interested in the case where the carrier
frequency scales as a certain function of $n$ in a narrow-band
model. Such an assumption leads to a significant change in the
scaling behavior owing to the attenuation characteristics. Recently,
the optimal capacity scaling of wireless radio networks has been
studied in~\cite{LeeChung:10,OzgurLevequeTse:10ITA} according to
operating regimes that are determined by the relationship between
the carrier frequency and the number $n$ of nodes. The frequency
scaling scenario of our study essentially follows the same arguments
as those in~\cite{LeeChung:10,OzgurLevequeTse:10ITA}. We aim to
study both information-theoretic upper bounds and achievable rate
scalings while allowing the frequency scaling with $n$. To the best
of our knowledge, such an attempt has never been done before in
underwater networks.

We explicitly characterize an {\em attenuation parameter} that
depends on the transmission distance and also on the carrier
frequency, and then identify fundamental operating regimes depending
on the parameter. For networks with $n$ regularly distributed nodes,
we derive upper bounds on the total throughput scaling using the
cut-set bound. In extended networks, our upper bound is based on the
characteristics of power-limited regimes shown
in~\cite{OzgurLevequeTse:07}. We show that the upper bound is
inversely proportional to the attenuation parameter. This leads to a
highly power-limited network for all the operating regimes, where
power consumption is important in determining performance.
Interestingly, it is seen that contrary to the case of wireless
radio networks, our upper bound heavily depends on the attenuation
parameter but not on the spreading factor (corresponding to the
path-loss exponent in wireless networks). On the other hand, in
dense networks, our upper bound basically follows the arguments,
similarly as in~\cite{OzgurJohariTseLeveque:10}: there exists either
a bandwidth or power limitation, or both, according to the operating
regimes (i.e., path-loss attenuation regimes). Specifically, the
network is bandwidth-limited as the path-loss attenuation is small.
This is because network performance on the total throughput is
roughly linear in the bandwidth. However, at the medium attenuation
regime, the network is both bandwidth- and power-limited since the
amount of available bandwidth and received signal power affects the
performance. Finally, the network becomes power-limited as the
attenuation parameter increases exponentially with respect to more
than $\sqrt{n}$, i.e., at the high attenuation regime. Hence, our
results indicate that the upper bound for dense networks has three
fundamentally different operating regimes according to the
attenuation parameter. In addition, to show constructively our
achievability result for extended networks, we describe the
conventional nearest-neighbor MH transmission~\cite{GuptaKumar:00},
and analyze its achievable throughput. We show that under extended
networks, the achievable rate scaling based on the MH routing
exactly matches the upper bound on the capacity scaling for all the
operating regimes. An achievability is also presented in dense
networks by utilizing the existing MH routing scheme with a slight
modification---we identify the operating regimes such that the
optimal capacity scaling is guaranteed. Therefore, this key result
indicates that frequency scaling is instrumental towards achieving
the order optimality in the regimes. Furthermore, a random network
scenario is discussed in this work. We show that under extended
random networks, the conventional MH-based achievable scheme is not
order-optimal for any operating regimes.

The rest of this paper is organized as follows.
Section~\ref{SEC:System} describes our system and channel models. In
Section~\ref{SEC:Upper}, the cut-set upper bounds on the throughput
are derived. In Section~\ref{SEC:Achievability}, the achievable
throughput scalings are analyzed. These results are extended to the
random network case in Section~\ref{SEC:Random}. Finally, Section
\ref{SEC:Conc} summarizes the paper with some concluding remarks.

Throughout this paper the superscript $H$, $[\cdot]_{ki}$, and
$\|\cdot\|_2$ denote the conjugate transpose, the $(k, i)$-th
element, and the largest singular value, respectively, of a matrix.
${\bf I}_n$ is the identity matrix of size $n\times n$, $\tr(\cdot)$
is the trace, $\det(\cdot)$ is the determinant, and $|\mathcal{X}|$
is the cardinality of the set $\mathcal{X}$. $\mathbb{C}$ is the
field of complex numbers and $E[\cdot]$ is the expectation. Unless
otherwise stated, all logarithms are assumed to be to the base 2.


\section{System and Channel Models} \label{SEC:System}

We consider a two-dimensional underwater network that consists of
$n$ nodes on a square such that two neighboring nodes are 1 and
$1/\sqrt{n}$ units of distance apart from each other in extended and
dense networks, respectively,
i.e., a regular
network~\cite{XieKumar:04,JovicicViswanathKulkarni:04}. We randomly
pick a matching of S--D pairs, so that each node is the destination
of exactly one source. Each node has an average transmit power
constraint $P$ (constant), and we assume that the channel state
information (CSI) is available at all receivers, but not at the
transmitters. It is assumed that each node transmits at a rate
$T(n)/n$, where $T(n)$ denotes the total throughput of the network.

Now let us turn to channel modeling. We assume frequency-flat
channel of bandwidth $W$ Hz around carrier frequency $f$, which
satisfies $f\gg W$, i.e., narrow-band model. This is a highly
simplified model, but nonetheless one that suffices to demonstrate
the fundamental mechanisms that govern capacity scaling. Assuming
that all the nodes have perfectly directional transmissions, we also
disregard multipath propagation, and simply focus on a line-of-sight
channel between each pair of nodes used
in~\cite{OzgurLevequeTse:07,NiesenGuptaShah:09,OzgurJohariTseLeveque:10}.
An underwater acoustic channel is characterized by an attenuation
that depends on both the distance $r_{ki}$ between nodes $i$ and $k$
($i, k\in\{1,\cdots,n\}$) and the signal frequency $f$, and is given
by
\begin{equation}   \label{EQ:af}
A(r_{ki},f)=c_0 r_{ki}^{\alpha}a(f)^{r_{ki}}
\end{equation}
for some constant $c_0>0$ independent of $n$, where $\alpha$ is the
spreading factor and $a(f)>1$ is the absorption
coefficient~\cite{Stojanovic:07}. For analytical tractability, we
assume that the spreading factor $\alpha$ does not change throughout
the network, i.e., that it is the same from short to long range
transmissions, as in wireless radio
networks~\cite{GuptaKumar:00,FranceschettiDouseTseThiran:07,OzgurLevequeTse:07}.
The spreading factor describes the geometry of propagation and is
typically $1\leq \alpha \leq 2$---its commonly used values are
$\alpha=1$ for cylindrical spreading, $\alpha=2$ for spherical
spreading, and $\alpha=1.5$ for the so-called practical spreading.
Note that existing models of wireless networks typically correspond
to the case for which $a(f)=1$ (or a positive constant independent
of $n$) and $\alpha>2$.\footnote{The counterpart of $\alpha$ in
wireless radio channels is the path-loss exponent.}

A common empirical model gives $a(f)$ in dB/km for $f$ in kHz
as~\cite{BerkhovskikhLysanov:82,Stojanovic:07}:
\begin{equation} \label{EQ:afeq}
10\log a(f)=a_0+a_1
f^2+a_2\frac{f^2}{b_1+f^2}+a_3\frac{f^2}{b_2+f^2},
\end{equation}
where $\{a_0,\cdots,a_3,b_1,b_2\}$ are some positive constants
independent of $n$. As stated earlier, we will allow the carrier
frequency $f$ to scale with the number $n$ of nodes. As a
consequence, a wider range of both $f$ and $n$ is covered, similarly
as
in~\cite{OzgurJohariTseLeveque:10,LeeChung:10,OzgurLevequeTse:10ITA}.
In particular, we consider the case where the frequency scales at
arbitrarily increasing rates relative to $n$, which enables us to
really capture the dependence on the frequency in
performance.\footnote{Otherwise, the attenuation parameter $a(f)$
scales as $\Theta(1)$ from (\ref{EQ:afeq}), which is not a matter of
interest in this work.} The absorption $a(f)$ is then an increasing
function of $f$ such that
\begin{equation} \label{EQ:afTheta}
a(f)=\Theta\left(e^{c_1 f^2}\right)
\end{equation}
with respect to $f$ for some constant $c_1>0$ independent of $n$.

The noise $n_i$ at node $i\in\{1,\cdots,n\}$ in an acoustic channel
can be modeled through four basic sources: turbulence, shipping,
waves, and thermal noise~\cite{Stojanovic:07}. We assume that $n_i$
is the circularly symmetric complex additive colored Gaussian noise
with zero mean and power spectral density (PSD) $N(f)$, and thus the
noise is frequency-dependent. The overall PSD of four sources decays
linearly on the logarithmic scale in the frequency region 100 Hz --
100 kHz, which is the operating region used by the majority of
acoustic systems, and thus is approximately given
by~\cite{Coates:89,Stojanovic:07}
\begin{equation} \label{EQ:Nfeq}
\log N(f)= a_4-a_5\log f
\end{equation}
for some positive constants $a_4$ and $a_5$ independent of
$n$.\footnote{Note that in our operating frequencies, $a_5=1.8$ is
commonly used for the above approximation~\cite{Stojanovic:07}.}
This means that $N(f)=O(1)$ since
\begin{equation} \label{EQ:NfTheta}
N(f)=\Theta\left(\frac{1}{f^{a_5}}\right)
\end{equation}
in terms of $f$ increasing with $n$. From (\ref{EQ:afTheta}) and
(\ref{EQ:NfTheta}), we may then have the following relationship
between the absorption $a(f)$ and the noise PSD $N(f)$:
\begin{equation} \label{EQ:Nfaf}
N(f)=\Theta\left(\frac{1}{(\log a(f))^{a_5/2}}\right).
\end{equation}

From the narrow-band assumption, the received signal $y_k$ at node
$k \in \{1,\cdots,n\}$ at a given time instance is given by
\begin{equation}
y_k=\sum_{i\in I}h_{ki}x_i+n_k, \label{EQ:signal}
\end{equation}
where
\begin{equation}
h_{ki}=\frac{e^{j\theta_{ki}}}{\sqrt{A(r_{ki},f)}} \label{EQ:hki}
\end{equation}
represents the complex channel between nodes $i$ and $k$,
$x_i\in\mathbb{C}$ is the signal transmitted by node $i$, and $I
\subset \{1,\cdots,n\}$ is the set of simultaneously transmitting
nodes. The random phases $\theta_{ki}$ are uniformly distributed
over $[0, 2\pi)$ and independent for different $i$, $k$, and time.
We thus assume a narrow-band time-varying channel, whose gain
changes to a new independent value for every symbol. Note that this
random phase model is based on a far-field
assumption~\cite{OzgurLevequeTse:07,NiesenGuptaShah:09,OzgurJohariTseLeveque:10},\footnote{In~\cite{FranceschettiMiglioreMinero:09},
instead of simply taking the far-field assumption, the physical
limit of wireless radio networks has been studied under certain
conditions on scattering elements. Further investigation is also
required to see whether this assumption is valid for underwater
networks of unit node density in the limit of large number $n$ of
nodes.} which is valid if the wavelength is sufficiently smaller
than the minimum node separation.

Based on the above channel characteristics, operating regimes of the
network are identified according to the following physical
parameters: the absorption $a(f)$ and the noise PSD $N(f)$ which are
exploited here by choosing the frequency $f$ based on the number $n$
of nodes. In other words, if the relationship between $f$ and $n$ is
specified, then $a(f)$ and $N(f)$ can be given by a certain scaling
function of $n$ from (\ref{EQ:afTheta}) and (\ref{EQ:NfTheta}),
respectively.


\section{Cut-set Upper Bound} \label{SEC:Upper}

To access the fundamental limits of an underwater network, new
cut-set upper bounds on the total throughput scaling are analyzed
from an information-theoretic perspective~\cite{CoverThomas:91}.
Consider a given cut $L$ dividing the network area into two equal
halves, as in~\cite{OzgurLevequeTse:07,OzgurJohariTseLeveque:10}
(see Figs.~\ref{FIG:cut1} and~\ref{FIG:cut_dense} for extended and
dense networks, respectively). Under the cut $L$, source nodes are
on the left, while all nodes on the right are destinations. In this
case, we have an $\Theta(n)\times \Theta(n)$ multiple-input
multiple-output (MIMO) channel between the two sets of nodes
separated by the cut.

\subsection{Extended Networks} \label{SEC:Upperex}

In this subsection, an upper bound based on the power transfer
argument~\cite{OzgurLevequeTse:07} is established for extended
networks, where the information flow for a given cut $L$ is
proportional to the total received signal power from source nodes.
Note, however, that the present problem is not equivalent to the
conventional extended network framework~\cite{OzgurLevequeTse:07}
due to quite different channel characteristics, and the main result
is shown here in a somewhat different way by providing a simpler
derivation than that of~\cite{OzgurLevequeTse:07}.

As illustrated in Fig.~\ref{FIG:cut1}, let $S_L$ and $D_L$ denote
the sets of sources and destinations, respectively, for the cut $L$
in an extended network. We then take into account an approach based
on the amount of power transferred across the network according to
different operating regimes, i.e., path-loss attenuation regimes. As
pointed out in~\cite{OzgurLevequeTse:07}, the information transfer
from $S_L$ to $D_L$ is highly power-limited since all the nodes in
the set $D_L$ are ill-connected to the left-half network in terms of
power. This implies that the information transfer is bounded by the
total received power transfer, rather than the cardinality of the
set $D_L$. For the cut $L$, the total throughput $T(n)$ for sources
on the left is bounded by the (ergodic) capacity of the MIMO channel
between $S_L$ and $D_L$ under time-varying channel assumption, and
thus is given by
\begin{eqnarray} \label{EQ:Tn12}
T(n)\le \underset{{\bf Q}_L\ge0}\max E\left[\log\det\left({\bf
I}_{n/2}+\frac{1}{N(f)}{\bf H}_L{\bf Q}_L{\bf
H}_L^{H}\right)\right],
\end{eqnarray}
where ${\bf H}_L$ is the matrix with entries $[{\bf
H}_L]_{ki}=h_{ki}$ for $i\in S_L, k\in D_L$, and ${\bf Q}_L\in
\mathbb{C}^{\Theta(n)\times\Theta(n)}$ is the positive semi-definite
input signal covariance matrix whose $k$-th diagonal element
satisfies $[{\bf Q}_L]_{kk}\le P$ for $k\in S_L$.

The relationship (\ref{EQ:Tn12}) will be further specified in
Theorem~\ref{TH:Cutset}. Before that, we first apply the techniques
of~\cite{JovicicViswanathKulkarni:04,VeeravalliLiangSayeed:05} to
obtain the total power transfer of the set $D_{L}$. These techniques
involve the design of the optimal input signal covariance matrix
${\bf Q}_L$ in terms of maximizing the upper bound~(\ref{EQ:Tn12})
on the capacity. If the matrix ${\bf H}_L$ has independent entries,
each $h_{ki}$ of which is a {\em proper} complex random
variable~\cite{NeeserMassey:93}, and has the same distribution as
$-h_{ki}$ for $i\in S_L, k\in D_L$, then the optimal ${\bf Q}_L$ is
diagonal, i.e., the maximum in (\ref{EQ:Tn12}) is attained with
$[\tilde{\bf Q}_L]_{kk}= P$ for $k\in S_L$, where $\tilde{\bf Q}_L$
is the diagonal matrix. We start from the following lemma.

\begin{lemma} \label{LEM:proper}
Each element $h_{ki}$ of the channel matrix ${\bf H}_L$ is a proper
complex random variable, where $i\in S_L, k\in D_L$.
\end{lemma}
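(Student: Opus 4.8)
The plan is to reduce the properness condition to a pair of elementary moment computations on the random phase. Recall from~\cite{NeeserMassey:93} that a complex random variable $Z$ with mean $m=E[Z]$ is \emph{proper} precisely when its pseudo-covariance vanishes, i.e.\ $E[(Z-m)^2]=0$; for a zero-mean variable this is simply the condition $E[Z^2]=0$. The first step, therefore, is to isolate the source of randomness in $h_{ki}$. By~(\ref{EQ:hki}) the magnitude $A(r_{ki},f)^{-1/2}$ is a deterministic positive constant once the node locations and the frequency $f$ are fixed, so the only random quantity is the phase $\theta_{ki}$, which is uniform on $[0,2\pi)$ and independent across $i$, $k$. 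Writing $c=A(r_{ki},f)^{-1/2}$, we have $h_{ki}=c\,e^{j\theta_{ki}}$, and everything reduces to the moments of $e^{j\theta_{ki}}$.

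The second step is the computation itself. Using the uniform density of $\theta_{ki}$, I would evaluate
\begin{equation}
E[h_{ki}]=c\cdot\frac{1}{2\pi}\int_0^{2\pi}e^{j\theta}\,d\theta=0,
\end{equation}
so that $h_{ki}$ is zero-mean, and likewise the pseudo-variance
\begin{equation}
E[h_{ki}^2]=c^2\cdot\frac{1}{2\pi}\int_0^{2\pi}e^{j2\theta}\,d\theta=0,
\end{equation}
both integrals vanishing because $\int_0^{2\pi}e^{jm\theta}\,d\theta=0$ for any nonzero integer $m$. Since the mean is zero, the condition $E[(h_{ki}-E[h_{ki}])^2]=E[h_{ki}^2]=0$ holds, which is exactly the definition of a proper complex random variable, establishing the claim.

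There is little genuine difficulty here: the statement is essentially a consequence of the circular symmetry built into the uniform-phase model of~(\ref{EQ:hki}). The only point requiring care is to invoke the correct notion of properness---namely the \emph{pseudo}-covariance, the unconjugated second moment $E[Z^2]$, rather than the ordinary variance $E[|Z|^2]$, which is plainly nonzero here since $|h_{ki}|^2=c^2$. Once that definition is in hand, the uniformity of $\theta_{ki}$ over a full period makes both the mean and the pseudo-variance vanish immediately, and the independence across entries noted above is what later lets one combine this per-entry property with~\cite{NeeserMassey:93} to justify the diagonal optimal input covariance in~(\ref{EQ:Tn12}).
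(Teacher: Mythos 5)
Your proof is correct and follows essentially the same route as the paper's: pull out the deterministic amplitude $A(r_{ki},f)^{-1/2}$, show $E[e^{j\theta_{ki}}]=0$ so the entry is zero-mean, and then show the pseudo-variance $E[e^{j2\theta_{ki}}]=0$ by uniformity of the phase over a full period (the paper evaluates these expectations via $\cos$ and $\sin$ rather than the explicit integrals, a purely cosmetic difference). Nothing is missing.
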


The proof of this lemma is presented in Appendix~\ref{PF:proper}. It
is readily proved that $h_{ki}$ has the same distribution as
$-h_{ki}$ for all $i$ and $k$ since the random phases $\theta_{ki}$
are uniformly distributed over $[0, 2\pi)$. Thus, using the result
of Lemma~\ref{LEM:proper}, we obtain the following result.

\begin{lemma} \label{LEM:diag}
The optimal input signal covariance matrix ${\bf Q}_L$ that
maximizes the upper bound~(\ref{EQ:Tn12}) is unique and is given by
the diagonal $\tilde{\bf Q}_L$ with entries $[\tilde{\bf Q}_L]_{kk}=
P$ for $k\in S_L$.
\end{lemma}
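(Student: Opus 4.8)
The plan is to follow the symmetrization approach of~\cite{JovicicViswanathKulkarni:04,VeeravalliLiangSayeed:05}, exploiting the three structural properties of ${\bf H}_L$ supplied by the channel model: the entries $h_{ki}$ are mutually independent (the phases $\theta_{ki}$ are independent), each is proper by Lemma~\ref{LEM:proper}, and each has the same law as $-h_{ki}$ (immediate from the uniform distribution of $\theta_{ki}$ on $[0,2\pi)$). Properness guarantees that the capacity in (\ref{EQ:Tn12}) is correctly parametrized by the covariance ${\bf Q}_L$ alone, so it remains to maximize $F({\bf Q})=E[\log\det({\bf I}_{n/2}+\frac{1}{N(f)}{\bf H}_L{\bf Q}{\bf H}_L^H)]$ over feasible ${\bf Q}$. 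The first observation is that $F$ is concave in ${\bf Q}$ on the positive semi-definite cone, since $\log\det(\cdot)$ is concave and ${\bf Q}\mapsto{\bf H}_L{\bf Q}{\bf H}_L^H$ is affine.

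Next I would apply a random sign flip. For a diagonal sign matrix ${\bf D}=\diag(d_1,\dots)$ with independent $d_i\in\{+1,-1\}$, the matrix ${\bf H}_L{\bf D}$ has the same distribution as ${\bf H}_L$, because scaling column $i$ by $d_i$ preserves the law of each entry (by the $h_{ki}\sim -h_{ki}$ property) and independence is untouched. Hence $F({\bf D}{\bf Q}{\bf D})=F({\bf Q})$ for every feasible ${\bf Q}$, while the constraint survives since $[{\bf D}{\bf Q}{\bf D}]_{kk}=d_k^2[{\bf Q}]_{kk}=[{\bf Q}]_{kk}\le P$. Averaging over the $2^{|S_L|}$ sign patterns and invoking Jensen's inequality for the concave $F$ gives $F(\bar{\bf Q})\ge E_{\bf D}[F({\bf D}{\bf Q}{\bf D})]=F({\bf Q})$, where $\bar{\bf Q}=E_{\bf D}[{\bf D}{\bf Q}{\bf D}]$ satisfies $[\bar{\bf Q}]_{ki}=[{\bf Q}]_{ki}E[d_kd_i]$, which vanishes for $k\ne i$ and equals $[{\bf Q}]_{kk}$ for $k=i$. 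Thus $\bar{\bf Q}$ is diagonal with the same diagonal as ${\bf Q}$ and does at least as well: the optimum may be taken diagonal. Restricting to diagonal ${\bf Q}=\diag(q_1,\dots)$ with $0\le q_k\le P$, I would then invoke monotonicity: the derivative $\partial F/\partial q_k = \frac{1}{N(f)}E[{\bf h}_k^H({\bf I}_{n/2}+\frac{1}{N(f)}{\bf H}_L{\bf Q}{\bf H}_L^H)^{-1}{\bf h}_k]$ is strictly positive since each column ${\bf h}_k$ is almost surely nonzero (indeed $|h_{ki}|=1/\sqrt{A(r_{ki},f)}>0$), so $F$ is maximized by saturating every constraint, $q_k=P$, which yields $\tilde{\bf Q}_L$.

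The main obstacle is uniqueness. The strict positivity of $\partial F/\partial q_k$ already pins every diagonal entry at $P$ uniquely, so it remains to rule out nonzero off-diagonals. This I expect to obtain by sharpening the Jensen step to a strict inequality via strict concavity of $\log\det$: if ${\bf Q}$ has a nonzero off-diagonal entry then ${\bf D}{\bf Q}{\bf D}$ genuinely varies with ${\bf D}$, and the strict concavity forces $F(\bar{\bf Q})>F({\bf Q})$, so any non-diagonal ${\bf Q}$ is strictly suboptimal. Making this rigorous is the delicate part: it requires that the map ${\bf Q}\mapsto{\bf H}_L{\bf Q}{\bf H}_L^H$ be injective with probability one, i.e.\ that ${\bf H}_L$ have full column rank a.s., which in turn uses the bookkeeping $|S_L|\le|D_L|$ with both $\Theta(n)$ (so the Gram structure is nondegenerate) together with the a.s.\ linear independence of the columns inherited from the independent continuous phases. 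Everything else in the argument is routine.
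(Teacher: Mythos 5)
Your proposal is correct and follows essentially the same route as the paper, which gives no in-line argument but defers precisely to the symmetrization proof of Section III of Veeravalli--Liang--Sayeed (and Jovicic--Viswanath--Kulkarni) that you reconstruct: sign-flip invariance of ${\bf H}_L$, Jensen with concavity of $\log\det$ to kill off-diagonals, and strict monotonicity in the diagonal entries to saturate the per-node power constraint. The one point you flag as delicate---almost-sure full column rank of ${\bf H}_L$ so that ${\bf Q}\mapsto{\bf H}_L{\bf Q}{\bf H}_L^{H}$ is a.s.\ injective---does hold here, since $|S_L|=|D_L|=\Theta(n)$ and $\det{\bf H}_L$ is a nontrivial trigonometric polynomial in the i.i.d.\ uniform phases $\theta_{ki}$ (the entries have fixed nonzero moduli $1/\sqrt{A(r_{ki},f)}$), hence vanishes only on a set of measure zero.
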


We refer to Section III of~\cite{VeeravalliLiangSayeed:05} for the
detailed proof. From Lemma~\ref{LEM:diag}, the expression
(\ref{EQ:Tn12}) is then rewritten as
\begin{eqnarray}
T(n) \!\!\!\!\!\!\!\!&&\le E\left[\log\det\left({\bf
I}_{n/2}+\frac{1}{N(f)}{\bf H}_L\tilde{\bf Q}_L{\bf
H}_L^{H}\right)\right] \nonumber\\ &&=E\left[\log\det\left({\bf
I}_{n/2}+\frac{P}{N(f)}{\bf H}_L{\bf H}_L^{H}\right)\right] \nonumber\\
&&\le E\left[\underset{k\in D_L}\sum\log
\left(1+\frac{P}{N(f)}\underset{i\in
S_L}\sum\big|h_{ki}\big|^2\right)\right] \nonumber\\ &&
=\underset{k\in D_L}\sum\log \left(1+\frac{P}{N(f)}\underset{i\in
S_L}\sum\frac{1}{A(r_{ki},f)}\right) \nonumber\\ &&\le
\underset{k\in D_L}\sum \underset{i\in
S_L}\sum\frac{P}{A(r_{ki},f)N(f)}, \label{EQ:Tntemp}
\end{eqnarray}
where the second inequality is obtained by applying generalized
Hadamard's inequality~\cite{ConstantinescuScharf:98} as
in~\cite{JovicicViswanathKulkarni:04,OzgurLevequeTse:07}. The last
two steps come from (\ref{EQ:af}) and the fact that $\log (1+x)\le
x$ for any $x$, which is only tight as $x$ is small. Note that the
right-hand side (RHS) of (\ref{EQ:Tntemp}) represents the total
amount of received signal-to-noise ratio (SNR) from the set $S_L$ of
sources to the set $D_L$ of destinations for a given cut $L$. To
further compute (\ref{EQ:Tntemp}), we define the following parameter
\begin{equation}
P_L^{(k)}=\frac{P}{c_0}\underset{i\in S_L}\sum
r_{ki}^{-\alpha}a(f)^{-r_{ki}} \label{EQ:PL}
\end{equation}
for some constant $c_0>0$ independent of $n$, which corresponds to
the total power received from the signal sent by all the sources
$i\in S_L$ at node $k$ on the right (see (\ref{EQ:af}) and
(\ref{EQ:hki})). For convenience, we now index the node positions
such that the source and destination nodes under the cut $L$ are
located at positions $(-i_x+1,i_y)$ and $(k_x,k_y)$, respectively,
for $i_x, k_x=1,\cdots,\sqrt{n}/2$ and $i_y, k_y=1,\cdots,\sqrt{n}$.
The scaling result of $P_{L}^{(k)}$ defined in (\ref{EQ:PL}) can
then be derived as follows.

\begin{lemma} \label{LEM:PL}
In an extended network, the term $P_{L}^{(k)}$ in (\ref{EQ:PL}) is
given by
\begin{eqnarray} \label{EQ:PLupper}
P_{L}^{(k)}=O\left(k_x^{1-\alpha}a(f)^{-k_x}\right),
\end{eqnarray}
where $k_x$ represents the horizontal coordinate of node $k\in D_L$
for $k_x=1,\cdots,\sqrt{n}/2$.
\end{lemma}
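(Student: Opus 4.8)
The plan is to bound the sum $P_L^{(k)} = \frac{P}{c_0}\sum_{i\in S_L} r_{ki}^{-\alpha}a(f)^{-r_{ki}}$ by exploiting the exponential decay factor $a(f)^{-r_{ki}}$, which should dominate and localize the contribution to sources $i$ nearest to node $k$. Since node $k$ sits at horizontal coordinate $k_x$ and the sources in $S_L$ occupy positions $(-i_x+1, i_y)$ with $i_x, k_x = 1,\dots,\sqrt{n}/2$ and $i_y,k_y=1,\dots,\sqrt{n}$, the distance satisfies $r_{ki} = \sqrt{(k_x+i_x-1)^2 + (k_y-i_y)^2}$. First I would lower-bound $r_{ki}$ by its horizontal component, $r_{ki} \ge k_x + i_x - 1 \ge k_x$, so that every source contributes at most $a(f)^{-k_x}$ from the exponential term; the closest sources (those with $i_x=1$, $i_y\approx k_y$) are at distance roughly $k_x$, confirming this is the right scale.

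Next I would organize the double sum by reindexing over the \emph{displacement} of each source from $k$. Writing the distance in terms of $d_x = k_x + i_x - 1 \ge k_x$ and $d_y = |k_y - i_y| \ge 0$, I would replace the sum over $i\in S_L$ with a sum over these offsets and bound it above by an integral (or a product of two one-dimensional sums). The key estimate is that $\sum_{d_x \ge k_x}\sum_{d_y\ge 0} (d_x^2+d_y^2)^{-\alpha/2} a(f)^{-\sqrt{d_x^2+d_y^2}}$ is controlled, after factoring out the leading behavior at $d_x = k_x, d_y=0$, by a geometric-type series in $a(f)^{-1}$. Since $a(f)>1$ grows with $n$, this series converges and its sum is $\Theta(1)$ relative to the leading term, so the whole expression collapses to a constant multiple of $k_x^{-\alpha} a(f)^{-k_x}$ times a factor accounting for the transverse ($d_y$) direction.

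The transverse summation is where I expect the $k_x^{1-\alpha}$ exponent (rather than $k_x^{-\alpha}$) to emerge, and this is the step requiring the most care. Summing over $i_y$ (equivalently $d_y$) at fixed $i_x$, one finds that the number of transverse sources at distance comparable to $k_x$ is on the order of $k_x$, because the exponential $a(f)^{-\sqrt{k_x^2 + d_y^2}}$ stays near $a(f)^{-k_x}$ only while $d_y = O(k_x)$ — more precisely while $\sqrt{k_x^2+d_y^2} - k_x = O(1/\log a(f))$, but the polynomial width contribution effectively scales as $k_x$. Multiplying the per-source magnitude $k_x^{-\alpha} a(f)^{-k_x}$ by this effective transverse count of $\Theta(k_x)$ yields the claimed $O(k_x^{1-\alpha} a(f)^{-k_x})$.

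The main obstacle will be making the transverse-direction count rigorous: I must show that the contribution of sources with large $d_y$ (where $r_{ki}$ exceeds $k_x$ substantially) is negligible, so that the effective transverse width is genuinely $\Theta(k_x)$ and not larger. The cleanest route is to split the $d_y$-sum into a near region $d_y \le k_x$ (contributing $\Theta(k_x)$ terms each of size $\approx k_x^{-\alpha}a(f)^{-k_x}$) and a far region $d_y > k_x$ (where both the polynomial and exponential decay make the tail geometrically small), then verify the far tail is dominated by the near region. Likewise I would check that summing over $i_x \ge 1$ (i.e.\ $d_x \ge k_x$) only multiplies by a convergent geometric factor and does not alter the order. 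Assembling these bounds gives $P_L^{(k)} = O\!\left(k_x^{1-\alpha} a(f)^{-k_x}\right)$ as stated.
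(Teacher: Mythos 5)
Your proposal is correct, but its bookkeeping differs from the paper's. The paper reduces the two-dimensional sum to a one-dimensional one by grouping sources into concentric annuli of unit width centered at node $k$ (inner radius $k_x+i-1$), moving every node of an annulus onto its inner boundary, and counting $\Theta(k_x+i)$ nodes per annulus via an area argument; this immediately gives $P_L^{(k)} \lesssim \sum_{i'\ge k_x} i'\cdot i'^{-\alpha} a(f)^{-i'}$, after which $i'^{1-\alpha}\le k_x^{1-\alpha}$ (using $\alpha\ge 1$) and a geometric series in $a(f)^{-1}$ finish the proof. You instead fix the horizontal displacement $d_x=k_x+i_x-1$ (a column-by-column, Cartesian decomposition) and split the transverse sum into a near region $d_y\le d_x$ (at most $O(d_x)$ terms, each bounded crudely by $d_x^{-\alpha}a(f)^{-d_x}$) and a far region $d_y>d_x$ (a geometric tail in $a(f)^{-d_y}$, dominated by the near region), then sum over columns geometrically. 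Both routes rest on the same two facts---there are $O(t)$ sources at distance scale $t$ from $k$, and $\sum_{t\ge k_x} t^{1-\alpha}a(f)^{-t}$ collapses to its first term---and both implicitly need $a(f)$ bounded away from $1$ so that the geometric sums have bounded ratio (guaranteed in the regimes of interest, where $a(f)$ grows with $n$). The paper's radial layering is more compact, handling both directions in one counting step; your Cartesian split avoids the node-relocation and area-counting argument and is somewhat more elementary. One remark: the heuristic in your third paragraph about the exponential staying near its peak only for $\sqrt{k_x^2+d_y^2}-k_x = O(1/\log a(f))$ is unnecessary for the upper bound (it would matter only for a matching lower bound); as your final paragraph correctly recognizes, the crude estimate of maximal term times count over the near region already yields the claimed $O\left(k_x^{1-\alpha}a(f)^{-k_x}\right)$.
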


The proof of this lemma is presented in Appendix~\ref{PF:PL}. We are
now ready to show the cut-set upper bound in extended networks.

\begin{theorem} \label{TH:Cutset}
For an underwater regular network of unit node density, the total
throughput $T(n)$ is upper-bounded by
\begin{eqnarray} \label{EQ:Tnupper}
T(n)\le \frac{c_2\sqrt{n}}{a(f)N(f)},
\end{eqnarray}
where $c_2>0$ is some constant independent of $n$.
\end{theorem}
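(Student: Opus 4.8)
The plan is to start from the received-SNR bound already established in (\ref{EQ:Tntemp}) and read it directly as a statement about the per-destination received power $P_L^{(k)}$ defined in (\ref{EQ:PL}). From (\ref{EQ:af}) we have $1/A(r_{ki},f) = c_0^{-1} r_{ki}^{-\alpha} a(f)^{-r_{ki}}$, so the inner sum over $i\in S_L$ in the last line of (\ref{EQ:Tntemp}), once the factor $P/N(f)$ is pulled out, is exactly $P_L^{(k)}/N(f)$. Hence the entire right-hand side collapses to
\[
T(n)\le \frac{1}{N(f)}\sum_{k\in D_L}P_L^{(k)},
\]
and the problem reduces to controlling the aggregate power $\sum_{k\in D_L}P_L^{(k)}$ transferred across the cut.

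Next I would invoke Lemma~\ref{LEM:PL}, which gives $P_L^{(k)}=O\!\left(k_x^{1-\alpha}a(f)^{-k_x}\right)$, a bound depending only on the horizontal coordinate $k_x$ of the destination and not on $k_y$. Since the destinations are indexed by $k_x=1,\dots,\sqrt{n}/2$ and $k_y=1,\dots,\sqrt{n}$, summing over the $\sqrt{n}$ nodes in each column contributes a factor $\sqrt{n}$, leaving
\[
\sum_{k\in D_L}P_L^{(k)}=O\!\left(\sqrt{n}\sum_{k_x=1}^{\sqrt{n}/2}k_x^{1-\alpha}a(f)^{-k_x}\right).
\]
The crux is then to show the remaining one-dimensional sum is $O(1/a(f))$. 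Because $\alpha\ge1$, we have $k_x^{1-\alpha}\le 1$ for every $k_x\ge1$, so the sum is dominated by the geometric series $\sum_{k_x\ge1}a(f)^{-k_x}=(a(f)-1)^{-1}$. Since (\ref{EQ:afTheta}) forces $a(f)\to\infty$ with $n$, we get $a(f)-1\ge a(f)/2$ for all large $n$, whence the sum is at most $2/a(f)$. Substituting back gives $\sum_{k\in D_L}P_L^{(k)}=O(\sqrt{n}/a(f))$, and dividing by $N(f)$ yields the claimed bound (\ref{EQ:Tnupper}) with a suitable constant $c_2$.

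I expect the main obstacle to be precisely the geometric-sum estimate: one must verify that the whole column sum over $k_x\in\{1,\dots,\sqrt{n}/2\}$ is absorbed into the single factor $1/a(f)$, rather than accumulating an extra logarithmic or polynomial factor in $n$. This succeeds only because $a(f)>1$ grows with $n$, so the total power is dominated by the destinations immediately adjacent to the cut ($k_x=1$), and the \emph{exponential} attenuation in distance---the distinguishing feature of the acoustic channel relative to the polynomial decay of wireless radio channels---is exactly what kills the tail of the sum. The hypothesis $1\le\alpha\le 2$ enters cleanly here: $\alpha\ge1$ is what lets me discard the prefactor $k_x^{1-\alpha}$ by bounding it by $1$, so I would flag this as the one place where the specific range of the spreading factor is used, and note that the resulting bound is therefore governed by $a(f)$ alone and not by $\alpha$.
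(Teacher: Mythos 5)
Your proposal is correct and follows essentially the same route as the paper's own proof: starting from the received-SNR bound (\ref{EQ:Tntemp}), identifying the inner sum with $P_L^{(k)}$, invoking Lemma~\ref{LEM:PL}, extracting the factor $\sqrt{n}$ from the $k_y$-sum, discarding $k_x^{1-\alpha}\le 1$ via $\alpha\ge 1$, and closing with the geometric series $\sum_{k_x}a(f)^{-k_x}\le (a(f)-1)^{-1}=O(1/a(f))$. Your explicit justification that $a(f)-1\ge a(f)/2$ for large $n$ (since $a(f)\to\infty$) spells out a step the paper leaves implicit, but the argument is otherwise identical.
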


\begin{proof}
From (\ref{EQ:af}) and (\ref{EQ:Tntemp})--(\ref{EQ:PLupper}), we
obtain the following upper bound on the total throughput $T(n)$:
\begin{eqnarray}
T(n)\!\!\!\!\!\!\!&&\le \frac{1}{N(f)}\underset{k\in D_L}\sum
P_{L}^{(k)} \nonumber\\ &&\le
\frac{1}{N(f)}\sum_{k_x=1}^{\sqrt{n}/2}\sum_{k_y=1}^{\sqrt{n}}
P_{L}^{(k)} \nonumber\\ &&\le
\frac{c_3P\sqrt{n}}{N(f)}\sum_{k_x=1}^{\sqrt{n}/2}\frac{1}{k_x^{\alpha-1}a(f)^{k_x}}
\nonumber\\ &&\le
\frac{c_3P\sqrt{n}}{N(f)}\sum_{k_x=1}^{\sqrt{n}/2}\frac{1}{a(f)^{k_x}}
\nonumber\\ &&\le \frac{c_3P\sqrt{n}}{N(f)}\frac{1}{a(f)-1} \nonumber\\
&&\le \frac{c_4P\sqrt{n}}{a(f)N(f)}, \nonumber
\end{eqnarray}
where $c_3$ and $c_4$ are some positive constants independent of
$n$, which is equal to (\ref{EQ:Tnupper}). This completes the proof
of the theorem.
\end{proof}

We remark that this upper bound is expressed as a function of the
absorption $a(f)$ and the noise PSD $N(f)$, whereas an upper bound
for wireless radio networks depends only on the constant value
$\alpha$~\cite{OzgurLevequeTse:07}.
In addition, using (\ref{EQ:afTheta}), (\ref{EQ:NfTheta}), and
(\ref{EQ:Nfaf}) in (\ref{EQ:Tnupper}) results in two other
expressions on the total throughput
\begin{equation} \label{EQ:Tnaf}
T(n)=O\left(\frac{\sqrt{n}\left(\log
a(f)\right)^{a_5/2}}{a(f)}\right)
\end{equation}
and
\begin{equation}
T(n)=O\left(\frac{\sqrt{n}f^{a_5}}{e^{c_1f^2}}\right) \nonumber
\end{equation}
for some positive constants $c_1$ and $a_5$ shown in
(\ref{EQ:afTheta}) and (\ref{EQ:Nfeq}), respectively. Hence, from
(\ref{EQ:Tnaf}), it is seen that the upper bound is inversely
proportional to the attenuation parameter $a(f)$ and decays fast
with increasing $a(f)$, thereby leading to a highly power-limited
network irrespective of the parameter $a(f)$.

%

\subsection{Dense Networks} \label{SEC:upperdense}

In a dense network, it is necessary to narrow down the class of S--D
pairs according to their Euclidean distance to obtain a tight upper
bound. In this subsection, we derive a new upper bound based on
hybrid approaches that consider either the sum of the capacities of
the multiple-input single-output (MISO) channel between transmitters
and each receiver or the amount of power transferred across the
network according to operating regimes, similarly as
in~\cite{OzgurJohariTseLeveque:10}.

For the cut $L$, the total throughput $T(n)$ for sources on the left
half is bounded by the capacity of the MIMO channel between $S_L$
and $D_L$, corresponding to the sets of sources and destinations,
respectively, and thus is given by (\ref{EQ:Tn12}). In the extended
network framework, upper bounding the capacity by the total received
SNR yields a tight bound due to poor power connections for all the
operating regimes. In a dense network, however, we may have
arbitrarily high received SNR for nodes in the set $D_L$ that are
located close to the cut, or even for all the nodes, depending on
the path-loss attenuation regimes, and thus need the separation
between destination nodes that are well- and ill-connected to the
left-half network in terms of power. More precisely, the set $D_L$
of destinations is partitioned into two groups $D_{L,1}$ and
$D_{L,2}$ according to their location, as illustrated in
Fig.~\ref{FIG:cut_dense}. Then, since Lemmas~\ref{LEM:proper}
and~\ref{LEM:diag} also hold for the dense network, by applying
generalized Hadamard's inequality~\cite{ConstantinescuScharf:98}, we
have
\begin{eqnarray}
T(n)\!\!\!\!\!\!\!&& \le \underset{{\bf Q}_L\ge0}\max
E\left[\log\det\left({\bf I}_{n/2}+\frac{1}{N(f)}{\bf H}_{L}{\bf
Q}_L{\bf H}_{L}^{H}\right)\right] \nonumber
\\ && \le
E\left[\log\det\left({\bf I}_{n/2}+\frac{P}{N(f)}{\bf H}_{L}{\bf
H}_{L}^{H}\right)\right] \nonumber
\\ && \le E\left[\log\det\left({\bf
I}_{|D_{L,1}|}+\frac{P}{N(f)}{\bf H}_{L,1}{\bf
H}_{L,1}^{H}\right)\right] \nonumber\\ && +
E\left[\log\det\left({\bf I}_{|D_{L,2}|}+\frac{P}{N(f)}{\bf
H}_{L,2}{\bf H}_{L,2}^{H}\right)\right], \label{EQ:TRHS}
\end{eqnarray}
where ${\bf H}_{L,l}$ is the matrix with entries $[{\bf
H}_{L,l}]_{ki}=h_{ki}$ for $i\in S_L$, $k\in D_{L,l}$, and $l=1,2$.
Note that the first and second terms in the RHS of (\ref{EQ:TRHS})
represent the information transfer from $S_L$ to $D_{L,1}$ and from
$S_L$ to $D_{L,2}$, respectively. Here, $D_{L,1}$ denotes the set of
destinations located on the rectangular slab of width $x_L/\sqrt{n}$
immediately to the right of the centerline (cut), where
$x_L\in\{0,1,\cdots,\sqrt{n}/2\}$. The set $D_{L,2}$ is given by
$D_L \setminus D_{L,1}$. It then follows that
$|D_{L,1}|=x_L\sqrt{n}$ and $|D_{L,2}|=(\sqrt{n}/2-x_L)\sqrt{n}$.

Let $T_l(n)$ denote the $l$-th term in the RHS of (\ref{EQ:TRHS})
for $l\in\{1,2\}$. It is then reasonable to bound $T_1(n)$ by the
cardinality of the set $D_{L,1}$ rather than the total received SNR.
In contrast, using the power transfer argument for the term
$T_2(n)$, as in the extended network case, will lead to a tight
upper bound. It is thus important to set the parameter $x_L$
according to the attenuation parameter $a(f)$, based on the
selection rule for $x_L$~\cite{OzgurJohariTseLeveque:10}, so that
only $D_{L,1}$ contains the destination nodes with high received
SNR. To be specific, we need to decide whether the SNR received by a
destination $k\in D_L$ from the set $S_L$ of sources, denoted by
$P_L^{(k)}/N(f)$, is larger than one, because degrees-of-freedom
(DoFs) (also known as capacity pre-log factor) of the MISO channel
are limited to one. If destination node $k$ has the total received
SNR greater than one, i.e., $P_L^{(k)}=\omega(N(f))$, then it
belongs to $D_{L,1}$. Otherwise, it follows that $k\in D_{L,2}$.

For analytical tractability, suppose that
\begin{equation}
a(f)=\Theta\left((1+\epsilon_0)^{n^\beta}\right) \textrm{    for
$\beta\in[0,\infty)$}, \label{EQ:afepsilon}
\end{equation}
where $\epsilon_0>0$ is an arbitrarily small constant, which
represents all the operating regimes with varying $\beta$. As
before, let us index the node positions such that the source and
destination nodes are located at positions
$\left(\frac{-i_x+1}{\sqrt{n}},\frac{i_y}{\sqrt{n}}\right)$ and
$\left(\frac{k_x}{\sqrt{n}},\frac{k_y}{\sqrt{n}}\right)$,
respectively, for $i_x, k_x=1,\cdots,\sqrt{n}/2$ and $i_y,
k_y=1,\cdots,\sqrt{n}$. We then obtain the following scaling results
for $P_L^{(k)}$ as shown below.

\begin{lemma} \label{LEM:PLdense}
In a dense network, the term $P_L^{(k)}$ in (\ref{EQ:PL}) is given
by
\begin{eqnarray}
P_L^{(k)}=\left\{\begin{array}{lll} O(n) &\textrm{if
$1\le\alpha<2$ and $k_x=o\left(n^{1/2-\beta+\epsilon}\right)$} \\
O\left(n\log n\right) &\textrm{if $\alpha=2$ and
$k_x=o\left(n^{1/2-\beta+\epsilon}\right)$}
\\
O\left(\frac{n^{\alpha/2}}{(1+\epsilon_0)^{k_x
n^{\beta-1/2}}}\max\left\{1,n^{1/2-\beta}\right\}\right) &\textrm{if
$k_x=\Omega\left(n^{1/2-\beta+\epsilon}\right)$}
\end{array}\right. \label{EQ:PLupperdense}
\end{eqnarray}
and
\begin{eqnarray}
P_L^{(k)}=\left\{\begin{array}{lll} \Omega\left(\frac{n^{\alpha/2-\epsilon}}{k_x^{\alpha-1}}\right) &\textrm{if $k_x=o\left(n^{1/2-\beta+\epsilon}\right)$} \\
\Omega\left(\frac{1}{\left(1+\epsilon_0\right)^{k_x
n^{\beta-1/2}}}\max\left\{1,\frac{n^{1/2-\beta}}{\left(1+\epsilon_0\right)^{n^{\beta-1/2}}}\right\}\right)
&\textrm{if $k_x=\Omega\left(n^{1/2-\beta+\epsilon}\right)$}
\end{array}\right. \label{EQ:PLlowerdense}
\end{eqnarray}
for arbitrarily small positive constants $\epsilon$ and
$\epsilon_0$, where $k_x/\sqrt{n}$ is the horizontal coordinate of
node $k\in D_{L,2}$.
\end{lemma}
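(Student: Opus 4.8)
The starting point is the explicit form of the summand in (\ref{EQ:PL}). Writing the distance in grid coordinates, $r_{ki}=\frac{1}{\sqrt n}\sqrt{(k_x+i_x-1)^2+(k_y-i_y)^2}$, and substituting the attenuation scaling (\ref{EQ:afepsilon}), I would rewrite $P_L^{(k)}=\Theta\!\left(n^{\alpha/2}\sum_{i_x,i_y} d_{ki}^{-\alpha}(1+\epsilon_0)^{-n^{\beta-1/2}d_{ki}}\right)$, where $d_{ki}=\sqrt{(k_x+i_x-1)^2+(k_y-i_y)^2}$ is the distance measured in units of the grid spacing and $\min_{i\in S_L} d_{ki}=\Theta(k_x)$. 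Two factors now compete: the polynomial path loss $n^{\alpha/2}d^{-\alpha}$ and the exponential absorption $(1+\epsilon_0)^{-n^{\beta-1/2}d}$, whose \emph{decay length} in grid units is $\Theta(n^{1/2-\beta})$. Since every summand is decreasing in $k_x$, $P_L^{(k)}$ is monotone in $k_x$, and the whole analysis reduces to estimating a two-dimensional lattice sum, which I would bound above and below by the corresponding integral over the left-half region (isolating the few nearest terms to handle the singularity at $d=0$). The governing dichotomy throughout is whether $n^{\beta-1/2}k_x$ is small or large compared to one, with the $n^{\epsilon}$ buffer in the thresholds providing the slack.

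For the upper bounds in (\ref{EQ:PLupperdense}): in the well-connected regime $k_x=o(n^{1/2-\beta+\epsilon})$ I would simply drop the absorption factor, $(1+\epsilon_0)^{-n^{\beta-1/2}d}\le 1$, reducing the estimate to the pure power-law lattice sum $n^{\alpha/2}\sum d^{-\alpha}$. This is the familiar wireless-type sum: for $1\le\alpha<2$ it is dominated by the far field, $n^{\alpha/2}\int_1^{\Theta(\sqrt n)}\! r^{1-\alpha}dr=\Theta(n)$, while for $\alpha=2$ the logarithmically divergent integral produces the extra factor, i.e. $O(n\log n)$; by monotonicity the worst case is the smallest $k_x$, so the bound is uniform over the regime. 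In the ill-connected regime $k_x=\Omega(n^{1/2-\beta+\epsilon})$ the product $n^{\beta-1/2}k_x=\Omega(n^\epsilon)\to\infty$, so absorption dominates; I would factor out the nearest-source value $(1+\epsilon_0)^{-n^{\beta-1/2}k_x}$ and sum the residual as a geometric series over the source columns, whose common ratio is $(1+\epsilon_0)^{-n^{\beta-1/2}}$. The behaviour of this ratio is exactly what yields the $\max\{1,n^{1/2-\beta}\}$ factor: it tends to $0$ when $\beta>1/2$ (the sum collapses to its first term, giving $1$) and to $1$ when $\beta<1/2$ (the sum inflates by $\Theta(n^{1/2-\beta})$), while the vertical spread is controlled by the same decay length.

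The lower bounds in (\ref{EQ:PLlowerdense}) follow by retaining a favourable subset of sources. For $k_x=o(n^{1/2-\beta+\epsilon})$ I would keep the $\Theta(k_x)$ sources in the column nearest the cut that lie within vertical grid-distance $k_x$; on this set $r_{ki}=\Theta(k_x/\sqrt n)$, so $r_{ki}^{-\alpha}=\Theta(n^{\alpha/2}k_x^{-\alpha})$, and because $n^{\beta-1/2}k_x=o(n^\epsilon)$ the absorption over the set is $n^{-o(1)}$, which the $n^{-\epsilon}$ factor in the statement absorbs, giving $\Omega(n^{\alpha/2-\epsilon}k_x^{1-\alpha})$. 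For $k_x=\Omega(n^{1/2-\beta+\epsilon})$ I would instead lower-bound the path loss trivially by a constant (all internode distances in a dense network are at most the diameter $\Theta(1)$) and keep either the single nearest source, which yields the term $1$ in the $\max$, or the whole nearest column, whose geometric sum contributes $\Theta(n^{1/2-\beta}/(1+\epsilon_0)^{n^{\beta-1/2}})$ when $\beta<1/2$; together these reproduce the stated $\max$.

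The main obstacle is the careful two-dimensional sum-to-integral estimation in the regime where the polynomial path loss and the exponential absorption are simultaneously active. Pinning down the transition at $k_x\asymp n^{1/2-\beta}$, and correctly accounting for how many near-minimal-distance sources contribute before the absorption cutoff---which is what generates the $\max\{1,n^{1/2-\beta}\}$ factor, the $\log n$ at $\alpha=2$, and the geometric-ratio dichotomy---requires the most attention. The remaining subtlety is bookkeeping: choosing the arbitrarily small constants $\epsilon$ and $\epsilon_0$ so that the sub-polynomial discrepancies (boundary effects and the width of the transition region) are absorbed and the bounds hold uniformly over each regime.
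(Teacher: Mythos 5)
Your proposal is correct and follows essentially the same route as the paper's proof: the same dichotomy at $k_x \asymp n^{1/2-\beta}$, dropping the absorption factor to get the power-law sum ($O(n)$, resp.\ $O(n\log n)$) in the well-connected regime, dropping the polynomial factor and summing the geometric series to get the $\max\{1,n^{1/2-\beta}\}$ factor in the ill-connected regime, and retaining a nearest subset of sources for the lower bounds (your vertical-column subset plays the role of the paper's one-node-per-ring choice). The only cosmetic difference is that the paper organizes the two-dimensional sum via circular layers of width one around the destination before comparing with integrals, rather than treating the lattice sum directly; note also that your step claiming the absorption loss $(1+\epsilon_0)^{-k_x n^{\beta-1/2}}=e^{-o(n^{\epsilon})}$ is absorbed by the $n^{-\epsilon}$ slack is exactly as (im)precise as the corresponding step in the paper, so it is not a gap relative to the paper's own argument.
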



The proof of this lemma is presented in Appendix~\ref{PF:PLdense}.
Although the upper and lower bounds for $P_L^{(k)}$ are not
identical to each other, showing these scaling results is sufficient
to make a decision on $x_L$ according to the operating regimes. It
is seen from Lemma~\ref{LEM:PLdense} that when
$k_x=o\left(n^{1/2-\beta+\epsilon}\right)$, $P_L^{(k)}$ does not
depend on the parameter $\beta$ (or $a(f)$), while for
$k_x=\Omega\left(n^{1/2-\beta+\epsilon}\right)$, node $k\in D_{L,2}$
gets ill-connected to the left half in terms of power since
$P_L^{(k)}$ decreases exponentially with $n$. More specifically,
when $k_x=o\left(n^{1/2-\beta+\epsilon}\right)$, it follows that
$P_L^{(k)}=\omega(n^{\alpha\beta})$ from (\ref{EQ:PLlowerdense}),
resulting in $P_L^{(k)}=\omega(N(f))$ due to $N(f)=O(1)$. In
contrast, under the condition
$k_x=\Omega\left(n^{1/2-\beta+\epsilon}\right)$, it is observed from
(\ref{EQ:PLupperdense}) that $P_L^{(k)}$ is exponentially decaying
as a function of $n$, thus leading to $P_L^{(k)}=o(N(f))$. As a
consequence, using the result of Lemma~\ref{LEM:PLdense}, three
different regimes are identified and the selected $x_L$ is specified
accordingly:
\begin{eqnarray} \label{EQ:xLdecision}
x_L=\left\{\begin{array}{lll} \sqrt{n}/2 &\textrm{if $\beta=0$}
\\ n^{1/2-\beta+\epsilon} &\textrm{if
$0<\beta\le1/2$} \\ 0 &\textrm{if $\beta>1/2$}
\end{array}\right.
\end{eqnarray}
for an arbitrarily small $\epsilon>0$. It is now possible to show
the proposed cut-set upper bound in dense networks.

\begin{theorem} \label{THM:upperdense}
Consider an underwater regular network of unit area. Then, the upper
bound on the total throughput $T(n)$ is given by
\begin{eqnarray} \label{EQ:Tndensefinal}
T(n)=\left\{\begin{array}{lll} O(n\log n) &\textrm{if $\beta=0$}
\\ O\left(n^{1-\beta+\epsilon}\log n\right) &\textrm{if
$0<\beta\le1/2$} \\
O\left(\frac{n^{(1+\alpha+\beta
a_5)/2}}{(1+\epsilon_0)^{n^{\beta-1/2}}}\right) &\textrm{if
$\beta>1/2$,}
\end{array}\right.
\end{eqnarray}
where $\epsilon$ and $\epsilon_0$ are arbitrarily small positive
constants, and $a_5$ and $\beta$ are defined in (\ref{EQ:Nfeq}) and
(\ref{EQ:afepsilon}), respectively.
\end{theorem}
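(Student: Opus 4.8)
The plan is to bound the two terms $T_1(n)$ and $T_2(n)$ on the right-hand side of (\ref{EQ:TRHS}) separately, using the degrees-of-freedom (cardinality) argument for $T_1(n)$ and the power-transfer argument for $T_2(n)$, and then to substitute the three choices of $x_L$ from (\ref{EQ:xLdecision}) together with the scaling of $P_L^{(k)}$ from Lemma~\ref{LEM:PLdense}. A preliminary step is to translate the hypothesis (\ref{EQ:afepsilon}) into a scaling of the noise PSD: since $\log a(f)=\Theta(n^\beta)$, relation (\ref{EQ:Nfaf}) gives $N(f)=\Theta(n^{-\beta a_5/2})$, so that $1/N(f)=\Theta(n^{\beta a_5/2})$; this factor is what carries the $\beta a_5$ exponent into the final bound.

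For the first term, applying generalized Hadamard's inequality once more to the first determinant in (\ref{EQ:TRHS}) and using $\frac{P}{N(f)}\sum_{i\in S_L}|h_{ki}|^2=P_L^{(k)}/N(f)$ (from (\ref{EQ:af}), (\ref{EQ:hki}), (\ref{EQ:PL})) yields $T_1(n)\le\sum_{k\in D_{L,1}}\log\!\left(1+P_L^{(k)}/N(f)\right)$. Because every $k\in D_{L,1}$ satisfies $k_x=o(n^{1/2-\beta+\epsilon})$, Lemma~\ref{LEM:PLdense} bounds $P_L^{(k)}$ by $O(n)$ (or $O(n\log n)$ when $\alpha=2$), so each summand obeys $P_L^{(k)}/N(f)=O(n^{1+\beta a_5/2}\log n)$ and hence $\log(1+P_L^{(k)}/N(f))=O(\log n)$. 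Multiplying by $|D_{L,1}|=x_L\sqrt{n}$ gives $T_1(n)=O(x_L\sqrt{n}\log n)$. Inserting $x_L=\sqrt{n}/2$ for $\beta=0$ produces $O(n\log n)$, and $x_L=n^{1/2-\beta+\epsilon}$ for $0<\beta\le1/2$ produces $O(n^{1-\beta+\epsilon}\log n)$; for $\beta>1/2$ we have $x_L=0$ so $T_1(n)=0$.

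For the second term the poor power connection of $D_{L,2}$ makes $\log(1+x)\le x$ tight, giving $T_2(n)\le\frac{1}{N(f)}\sum_{k\in D_{L,2}}P_L^{(k)}$. Using the upper bound in (\ref{EQ:PLupperdense}) valid for $k_x=\Omega(n^{1/2-\beta+\epsilon})$ and summing the $\sqrt{n}$ values of $k_y$ reduces the problem to the geometric sum $\sum_{k_x\ge x_L}(1+\epsilon_0)^{-k_x n^{\beta-1/2}}$. When $\beta>1/2$ the ratio $(1+\epsilon_0)^{-n^{\beta-1/2}}\to0$, the sum is dominated by its first term $(1+\epsilon_0)^{-n^{\beta-1/2}}$, and collecting the prefactors $\sqrt{n}\cdot n^{\alpha/2}\cdot\max\{1,n^{1/2-\beta}\}=\Theta(n^{(1+\alpha)/2})$ together with $1/N(f)=\Theta(n^{\beta a_5/2})$ yields exactly $O\!\left(n^{(1+\alpha+\beta a_5)/2}/(1+\epsilon_0)^{n^{\beta-1/2}}\right)$, the third line of (\ref{EQ:Tndensefinal}).

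The delicate regime is $0<\beta\le1/2$, where $(1+\epsilon_0)^{n^{\beta-1/2}}\to1$ and the geometric series must be summed carefully: its head term is $(1+\epsilon_0)^{-x_L n^{\beta-1/2}}=(1+\epsilon_0)^{-n^{\epsilon}}$ and its common-ratio denominator is $1-(1+\epsilon_0)^{-n^{\beta-1/2}}=\Theta(n^{\beta-1/2})$, so the sum is $O(n^{1/2-\beta}(1+\epsilon_0)^{-n^\epsilon})$. The super-polynomial decay of $(1+\epsilon_0)^{-n^\epsilon}$ then dominates every polynomial prefactor, making $T_2(n)$ negligible compared with the $O(n^{1-\beta+\epsilon}\log n)$ bound already obtained for $T_1(n)$. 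This near-unit geometric ratio is the main obstacle: one must verify that it does not inflate the sum and that the exponentially small head term overwhelms the polynomial powers of $n$ accumulated from $n^{\alpha/2}$, $\max\{1,n^{1/2-\beta}\}$, and $1/N(f)$. Combining the three regimes through $T(n)\le T_1(n)+T_2(n)$ then gives (\ref{EQ:Tndensefinal}).
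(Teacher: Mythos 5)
Your proposal is correct and follows essentially the same route as the paper: the same $T_1/T_2$ decomposition of (\ref{EQ:TRHS}), the DoF bound $O(x_L\sqrt{n}\log n)$ for $T_1(n)$, and the power-transfer bound for $T_2(n)$ via Lemma~\ref{LEM:PLdense} with the geometric sums over $k_x$ (your treatment of the near-unit-ratio geometric series in the regime $0<\beta\le 1/2$ is in fact more explicit than the paper's, which merely asserts that $T_2(n)$ is dominated by $T_1(n)$ there). The only cosmetic difference is that the paper bounds each summand of $T_1(n)$ directly via the minimum node separation $1/\sqrt{n}$ rather than invoking Lemma~\ref{LEM:PLdense}, which also sidesteps the boundary technicality of applying the lemma's $k_x=o(n^{1/2-\beta+\epsilon})$ case to nodes with $k_x$ of order exactly $x_L$.
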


\begin{proof}
We first compute the first term $T_1(n)$ in (\ref{EQ:TRHS}),
focusing on the case for $0\le\beta\le1/2$ since otherwise
$T_1(n)=0$. Since the nodes in the set $D_{L,1}$ have good power
connections to the left-half network and the information transfer to
$D_{L,1}$ is limited in bandwidth (but not power), the term $T_1(n)$
is upper-bounded by the sum of the capacities of the MISO channels.
More specifically, by generalized Hadamard's
inequality~\cite{ConstantinescuScharf:98}, $T_1(n)$ can be easily
bounded by
\begin{eqnarray}
T_1(n)\!\!\!\!\!\!\!&& \le \underset{k\in D_{L,1}}\sum\log
\left(1+\frac{P}{N(f)}\underset{i\in S_L}\sum
\frac{1}{A(r_{ki},f)}\right) \nonumber\\ &&\le x_L \sqrt{n}
\log\left(1+\frac{Pn^{\alpha/2+1}}{a(f)^{1/\sqrt{n}}N(f)}\right)
\nonumber\\ &&\le x_L \sqrt{n}
\log\left(1+\frac{Pn^{\alpha/2+1}}{N(f)}\right) \nonumber\\ &&\le
c_5 x_L \sqrt{n}\log n \label{EQ:denseT1}
\end{eqnarray}
for some constant $c_5>0$ independent of $n$, where the last two
steps are obtained from the fact that $0<a(f)\le1$ and $N(f)$ tends
to decrease polynomially with $n$ from the relation in
(\ref{EQ:Nfaf}). The upper bound for the second term $T_2(n)$ in
(\ref{EQ:TRHS}) is now derived under the condition
$\beta\in(0,\infty)$. Similarly as in the steps of
(\ref{EQ:Tntemp}), we have
\begin{eqnarray}
T_2(n)\!\!\!\!\!\!\!&& \le \underset{k\in D_{L,2}}\sum\log
\left(1+\frac{P}{N(f)}\underset{i\in S_L}\sum
\frac{1}{A(r_{ki},f)}\right) \nonumber\\ && \le \underset{k\in
D_{L,2}}\sum\underset{i\in S_L}\sum \frac{P}{A(r_{ki},f)N(f)}
\nonumber\\ && =\frac{1}{N(f)}\underset{k\in D_{L,2}}\sum P_L^{(k)},
\label{EQ:denseT2}
\end{eqnarray}
which corresponds to the sum of the total received SNR from the
left-half network to the destination set $D_{L,2}$. Hence, combining
the two bounds (\ref{EQ:denseT1}) and (\ref{EQ:denseT2}) along with
the choices for $x_L$ specified in (\ref{EQ:xLdecision}), we obtain
the following upper bound on the total throughput $T(n)$:
\begin{eqnarray}
T(n)\!\!\!\!\!\!\!&& \le \left\{\begin{array}{lll} c_5n\log n
&\textrm{if
$\beta=0$} \\
c_5n^{1-\beta+\epsilon}\log n+\frac{1}{N(f)}\underset{k\in
D_{L,2}}\sum P_L^{(k)} &\textrm{if
$0<\beta\le1/2$} \\
\frac{1}{N(f)}\underset{k\in D_{L}}\sum P_L^{(k)} &\textrm{if
$\beta>1/2$}
\end{array}\right. \nonumber\\ && = \left\{\begin{array}{lll} c_5n\log n &\textrm{if
$\beta=0$} \\
c_5 n^{1-\beta+\epsilon}\log n+\frac{1}{N(f)}
\sum_{k_x=x_L}^{\sqrt{n}/2}\sum_{k_y=1}^{\sqrt{n}}\frac{n^{\alpha/2+1/2-\beta}}{(1+\epsilon_0)^{k_x
n^{\beta-1/2}}} &\textrm{if
$0<\beta\le1/2$} \\
\frac{1}{N(f)}\sum_{k_x=1}^{\sqrt{n}/2}\sum_{k_y=1}^{\sqrt{n}}
\frac{n^{\alpha/2}}{(1+\epsilon_0)^{k_x n^{\beta-1/2}}} &\textrm{if
$\beta>1/2$}
\end{array}\right. \nonumber\\ && \le \left\{\begin{array}{lll} c_5n\log
n &\textrm{if
$\beta=0$} \\
c_6n^{1-\beta+\epsilon}\log n &\textrm{if
$0<\beta\le1/2$} \\
\frac{n^{(1+\alpha)/2}}{N(f)}\sum_{k_x=1}^{\sqrt{n}/2}
\frac{1}{(1+\epsilon_0)^{k_x n^{\beta-1/2}}} &\textrm{if
$\beta>1/2$}
\end{array}\right. \label{EQ:Tnregimes}
\end{eqnarray}
for an arbitrarily small $\epsilon>0$ and some constant $c_6>0$
independent of $n$, where the equality comes from
(\ref{EQ:PLupperdense}). The second inequality holds due to the fact
that the term $\frac{n^{\alpha/2}}{(1+\epsilon_0)^{k_x
n^{\beta-1/2}}}$ tends to decay exponentially with $n$ under the
conditions $0<\beta\le1/2$ and $x_L\le k_x\le\sqrt{n}/2$, and thus
the total is dominated by the information transfer $T_1(n)$ in
(\ref{EQ:denseT1}). Now let us focus on the last line of
(\ref{EQ:Tnregimes}), which corresponds to the total amount of SNR
received by all nodes for the condition $\beta>1/2$. For
$\beta>1/2$, using the two relationships (\ref{EQ:Nfaf}) and
(\ref{EQ:afepsilon}) follows that
\begin{eqnarray}
\frac{n^{(1+\alpha)/2}}{N(f)}\sum_{k_x=1}^{\sqrt{n}/2}
\frac{1}{(1+\epsilon_0)^{k_x n^{\beta-1/2}}} \!\!\!\!\!\!\!&& \le
\frac{n^{(1+\alpha)/2}}{N(f)}
\frac{1}{(1+\epsilon_0)^{n^{\beta-1/2}}-1} \nonumber\\ && \le
\frac{c_7 n^{(1+\alpha+\beta
a_5)/2}}{(1+\epsilon_0)^{n^{\beta-1/2}}} \nonumber
\end{eqnarray}
for some constant $c_7>0$ independent of $n$, where the second
inequality holds due to $(1+\epsilon_0)^{n^{\beta-1/2}}=\omega(1)$
under the condition. This coincides with the result shown in
(\ref{EQ:Tndensefinal}), which completes the proof.
\end{proof}

Note that the upper bound~\cite{OzgurLevequeTse:07} for wireless
radio networks of unit area is given by $O(n \log n)$, which is the
same as the case with $\beta=0$ (or equivalently $a(f)=\Theta(1)$)
in the dense underwater network. Now let us discuss the fundamental
limits of the network according to three different operating regimes
shown in (\ref{EQ:Tndensefinal}).

\begin{remark}
The upper bound on the total capacity scaling is illustrated in
Fig.~\ref{FIG:bounds} versus the parameter $\beta$ (logarithmic
terms are omitted for convenience). We first address the regime
$\beta=0$ (i.e., low path-loss attenuation regime), in which the
upper bound on $T(n)$ is active with $x_L=\sqrt{n}/2$, or
equivalently $D_{L,1}=D_L$, while $T_2(n)=0$. In this case, the
total throughput of the network is limited by the DoFs of the
$\Theta(n)\times \Theta(n)$ MIMO channel between $S_L$ and $D_L$,
and is roughly linear in the bandwidth, thus resulting in a
bandwidth-limited network. In particular, our interest is in the
operating regimes for which the network becomes power-limited as
$\beta>0$. In the second regime $0<\beta\le 1/2$ (i.e., medium
path-loss attenuation regime), as pointed out in the proof of
Theorem~\ref{THM:upperdense}, the upper bound on $T(n)$ is dominated
by the information transfer from $S_L$ to $D_{L,1}$, that is, the
term $T_1(n)$ in (\ref{EQ:denseT1}) contributes more than $T_2(n)$
in (\ref{EQ:denseT2}). The total throughput is thus limited by the
DoFs of the MIMO channel between $S_L$ and $D_{L,1}$, since more
available bandwidth leads to an increment in $T_1(n)$. As a
consequence, in this regime, the network is both bandwidth- and
power-limited. In the third regime $\beta>1/2$ (i.e., high path-loss
attenuation regime), the upper bound (\ref{EQ:denseT2}) is active
with $x_L=0$, or equivalently $D_{L,2}=D_L$, while $T_1(n)=0$. The
information transfer to $D_L$ is thus totally limited by the sum of
the total received SNR from the left-half network to the destination
nodes in $D_L$. In other words, in the third regime, the network is
limited in power, but not in bandwidth.
\end{remark}

Note that the upper bound on $T(n)$ decays polynomially with
increasing $\beta$ in the regime $0<\beta\le 1/2$, while it drops
off exponentially when $\beta>1/2$. In addition, two other
expressions on the total throughput $T(n)$ are summarized as
follows.

\begin{remark}
From (\ref{EQ:Nfaf}) and (\ref{EQ:afepsilon}), the upper bound and
the corresponding operating regimes can also be presented below in
terms of the attenuation parameter $a(f)$:
\begin{eqnarray}
T(n)=\left\{\begin{array}{lll} O(n\log n) &\textrm{if
$a(f)=\Theta(1)$}
\\ O\left(\frac{n^{1+\epsilon}\log n}{\log a(f)}\right) &\textrm{if
$a(f)=\omega(1)$ and $a(f)=O\left((1+\epsilon_0)^{\sqrt{n}}\right)$} \\
O\left(\frac{n^{(1+\alpha)/2} \left(\log
a(f)\right)^{a_5/2}}{a(f)^{1/\sqrt{n}}}\right) &\textrm{if
$a(f)=\omega\left((1+\epsilon_0)^{\sqrt{n}}\right)$.}
\end{array}\right. \nonumber
\end{eqnarray}
Note that as $a(f)=\omega\left((1+\epsilon_0)^{\sqrt{n}}\right)$, we
also obtain
\begin{equation}
T(n)=O\left(\frac{n^{(1+\alpha)/2}}{a(f)^{1/\sqrt{n}}N(f)}\right),
\nonumber
\end{equation}
which is expressed as a function of the spreading factor $\alpha$ as
well as the absorption $a(f)$ and the noise PSD $N(f)$. Using
(\ref{EQ:afTheta}) and (\ref{EQ:NfTheta}) further yields the
following expression
\begin{eqnarray}
T(n)=\left\{\begin{array}{lll} O(n\log n) &\textrm{if $f=\Theta(1)$}
\\ O\left(\frac{n^{1+\epsilon}\log n}{f^2}\right) &\textrm{if
$f=\omega(1)$ and $f=O\left(n^{1/4}\right)$} \\
O\left(\frac{n^{(1+\alpha)/2}f^{a_5}}{e^{c_1f^2/\sqrt{n}}}\right)
&\textrm{if $f=\omega\left(n^{1/4}\right)$,}
\end{array}\right. \nonumber
\end{eqnarray}
which represents the upper bound for the operating regimes
identified by frequency scaling.
\end{remark}


\section{Achievability Result} \label{SEC:Achievability}

In this section, to show the order optimality in underwater
networks, we analyze the achievable throughput scaling for both
extended and dense networks with the existing transmission scheme,
commonly used in wireless radio networks. Under an extended regular
network, the conventional MH transmission is introduced and its
optimal achievability result is shown. Under a dense regular
network, we examine the operating regimes for which the achievable
throughput matches the upper bound shown in
Section~\ref{SEC:upperdense}.

\subsection{Extended Networks}

The nearest-neighbor MH routing protocol~\cite{GuptaKumar:00} will
be briefly described to show the order optimality. The basic
procedure of the MH protocol under our extended regular network is
as follows:
\begin{itemize}
\item Divide the network into square routing cells, each of which
has unit area.
\item Draw an line connecting a S--D pair. A source transmits a packet to its destination using the nodes in the adjacent cells passing through the line.
\item Use the full transmit power at each node, i.e., the transmit power $P$.
\end{itemize}


The achievable rate of MH is now shown by quantifying the amount of
interference.

\begin{lemma}   \label{LEM:interference}
Consider an extended regular network that uses the nearest-neighbor
MH protocol. Then, the total interference power $P_I$ from other
simultaneously transmitting nodes, corresponding to the set
$I\subset \{1,\cdots,n\}$, is upper-bounded by $\Theta(1/a(f))$,
where $a(f)$ denotes the absorption coefficient greater than 1.
\end{lemma}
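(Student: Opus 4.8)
The plan is to exploit the exponential spatial decay of the absorption factor $a(f)^{r}$, which localizes the interference to the nearest active transmitters. First I would fix the reference receiver and describe the geometry of the simultaneously active interferers. Under the nearest-neighbor MH protocol on the extended regular network, the unit-area routing cells are scheduled with a constant spatial-reuse (TDMA) pattern, so the transmitting nodes in $I$ are separated by a constant reuse distance. I would then organize these interferers into concentric square tiers around the receiver, indexed by $\ell = 1, 2, \dots$, arguing that the $\ell$-th tier contains $O(\ell)$ active nodes, each located at Euclidean distance $\Omega(\ell)$.

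Next, from (\ref{EQ:af}) and (\ref{EQ:hki}) the power received from a single interferer at distance $r$ is $P\,|h_{ki}|^2 = P/A(r,f) = P/\!\left(c_0 r^{\alpha} a(f)^{r}\right)$. Summing the contributions tier by tier, I would bound
\[
P_I \;\le\; \sum_{\ell=1}^{\infty} O(\ell)\cdot \frac{P}{c_0\,\ell^{\alpha} a(f)^{\ell}} \;=\; O\!\left(\sum_{\ell=1}^{\infty} \frac{\ell^{1-\alpha}}{a(f)^{\ell}}\right).
\]
Since $1 \le \alpha \le 2$, we have $\ell^{1-\alpha} \le 1$ for every $\ell \ge 1$, so the sum is dominated by the geometric series $\sum_{\ell=1}^{\infty} a(f)^{-\ell} = 1/(a(f)-1)$. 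Because $a(f) > 1$ grows with $n$, we have $a(f) - 1 = \Theta(a(f))$, and hence $P_I = O(1/a(f))$. The matching lower bound follows from noting that at least one interferer sits in the first tier at distance $\Theta(1)$, contributing $\Theta(1/a(f))$; combining the two gives $P_I = \Theta(1/a(f))$.

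The main obstacle is the tier-counting geometry: establishing cleanly that each tier $\ell$ carries only $O(\ell)$ active interferers while they all lie at distance $\Omega(\ell)$ under the chosen reuse schedule. Once this $O(\ell)$-versus-$\Omega(\ell)$ bookkeeping is in place, the geometric factor $a(f)^{-\ell}$ overwhelms the polynomial factor $\ell^{1-\alpha}$, the series converges for every $\alpha \ge 1$, and the bound turns out to be insensitive to the spreading factor---mirroring the role that the absorption coefficient, rather than $\alpha$, plays in the cut-set bound of Theorem~\ref{TH:Cutset}.
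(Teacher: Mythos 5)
Your proposal is correct and follows essentially the same route as the paper's proof: concentric layers around the receiver with $O(\ell)$ interferers at distance $\Omega(\ell)$, the bound $\ell^{1-\alpha}\le 1$ for $\alpha\ge 1$, and domination by the geometric series $\sum_{\ell}a(f)^{-\ell}=O(1/a(f))$. The only differences are cosmetic---the paper counts exactly $8k$ cells per layer and conservatively lets all of them interfere (no reuse pattern needed for an upper bound), and your added matching lower bound is harmless but unnecessary since the lemma only claims an upper bound.
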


\begin{proof}
There are $8k$ interfering routing cells, each of which includes one
node, in the $k$-th layer $l_k$ of the network as illustrated in
Fig.~\ref{FIG:layer}. Then from (\ref{EQ:af}), (\ref{EQ:signal}),
and (\ref{EQ:hki}), the total interference power $P_I$ at each node
from simultaneously transmitting nodes is upper-bounded by
\begin{eqnarray}
P_I\!\!\!\!\!\!\!&&=\sum_{k=1}^{\infty}(8k)\frac{P}{c_0k^{\alpha}a(f)^{k}}
\nonumber\\&&=\frac{8P}{c_0}\sum_{k=1}^{\infty}\frac{1}{k^{\alpha-1}a(f)^{k}}
\nonumber\\ &&\le
\frac{8P}{c_0}\sum_{k=1}^{\infty}\frac{1}{a(f)^{k}} \nonumber\\
&&\le \frac{c_8}{a(f)}, \nonumber
\end{eqnarray}
where $c_0$ and $c_8$ are some positive constants independent of
$n$, which completes the proof.
\end{proof}

Note that the received signal power no longer decays polynomially
but rather exponentially with propagation distance in our network.
This implies that the absorption term $a(f)$ in (\ref{EQ:af}) will
play an important role in determining the performance. It is also
seen that the upper bound on $P_I$ does not depend on the spreading
factor $\alpha$. Using Lemma~\ref{LEM:interference}, it is now
possible to simply obtain a lower bound on the capacity scaling in
the network, and hence the following result presents the achievable
rates under the MH protocol.

\begin{theorem} \label{TH:achievable}
In an underwater regular network of unit node density,
\begin{equation}   \label{EQ:achievablerate}
T(n)=\Omega\left(\frac{n^{1/2}}{a(f)N(f)}\right)
\end{equation}
is achievable.
\end{theorem}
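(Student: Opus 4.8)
The plan is to lower-bound the total throughput by exhibiting the performance of the nearest-neighbor MH scheme described above, treating all simultaneous transmissions as a single point-to-point link plus interference-as-noise. First I would observe that under the MH protocol each hop connects two neighboring nodes in adjacent routing cells, so the relevant transmission distance is $r_{ki}=\Theta(1)$ in an extended network. By the attenuation law (\ref{EQ:af}), the received signal power of a single desired hop is then $\Theta\!\left(P/(c_0\cdot 1^\alpha\, a(f)^{1})\right)=\Theta(P/a(f))$. This is the key quantity: in contrast to wireless networks the desired signal already carries the absorption factor $a(f)$, which will propagate into the final rate expression.

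Next I would invoke Lemma~\ref{LEM:interference} to control the aggregate interference. That lemma gives $P_I=O(1/a(f))$ for the total interference power seen at any receiver under a suitable spatial reuse (TDMA-type) schedule that activates one node per routing cell per color, with only a constant number of colors needed (independent of $n$). Treating the interference as additional Gaussian noise, a single active hop achieves a rate of at least
\begin{equation}
\log\left(1+\frac{\Theta(P/a(f))}{N(f)+P_I}\right)
=\log\left(1+\frac{\Theta(P/a(f))}{N(f)+O(1/a(f))}\right). \nonumber
\end{equation}
Since $N(f)=O(1)$ and $a(f)>1$, the denominator is $\Theta(N(f))$ (the interference term $O(1/a(f))$ is dominated by $N(f)$, or at worst comparable, so it only changes constants). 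Using $\log(1+x)\ge cx$ for $x$ bounded, the per-hop rate is $\Omega\!\left(\frac{P}{a(f)N(f)}\right)$, i.e. each active link carries on the order of $1/(a(f)N(f))$ bits per channel use.

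It then remains to account for the number of simultaneously active S--D routes and the per-source rate penalty from multi-hopping. The standard MH argument for a regular network with $n$ nodes on a $\sqrt{n}\times\sqrt{n}$ grid shows that the number of routes crossing any given routing cell (the traffic bottleneck) is $\Theta(\sqrt{n})$, so each of the $n$ sources can be served a fraction $\Theta(1/\sqrt{n})$ of the channel uses; multiplying the $n$ S--D pairs by the per-pair rate of $\Omega\!\left(\frac{1}{\sqrt{n}}\cdot\frac{1}{a(f)N(f)}\right)$ yields an aggregate throughput of $\Omega\!\left(\frac{\sqrt{n}}{a(f)N(f)}\right)$, matching (\ref{EQ:achievablerate}). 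Since the TDMA color count and the routing overhead are constants independent of $n$, they are absorbed into the implicit constant.

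The main obstacle I anticipate is the interference-versus-noise comparison in the denominator: one must verify that $P_I=O(1/a(f))$ does not dominate $N(f)$ in a way that degrades the scaling. Here the structure of the underwater channel actually helps rather than hurts—because $a(f)>1$ and grows, the interference $O(1/a(f))$ is small, so the SINR is essentially an SNR and the per-hop rate is governed by $P/(a(f)N(f))$ as claimed. The only subtlety is ensuring $\log(1+x)=\Omega(x)$ is applied in a regime where $x=\Theta(P/(a(f)N(f)))$ may be small; since the lower bound $\log(1+x)\ge x/2$ for $x\in(0,1]$ (and $\log(1+x)\ge \log 2$ for $x\ge 1$) covers both cases, the linear-in-$x$ lower bound is legitimate throughout, and the theorem follows.
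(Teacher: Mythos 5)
Your proposal is correct and follows essentially the same route as the paper's proof: the nearest-neighbor MH protocol, the interference bound $P_I=O(1/a(f))$ from Lemma~\ref{LEM:interference}, treating interference as noise and linearizing $\log(1+x)=\Omega(x)$ for small $x$ to get a per-hop rate of $\Omega\left(\frac{1}{a(f)N(f)}\right)$, and multiplying by the $\Omega(\sqrt{n})$ simultaneously active transmissions. The one step to tighten is your claim that the interference $O(1/a(f))$ is dominated by $N(f)$: this does not follow from $N(f)=O(1)$ and $a(f)>1$ alone (if $N(f)$ were $o(1/a(f))$ the SINR would saturate at $\Theta(1)$ and the claimed rate would fail), but it does follow from the noise--absorption relationship (\ref{EQ:Nfaf}), which gives $N(f)=\Omega(1/a(f))$ --- exactly the step the paper invokes.
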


\begin{proof}
Suppose that the nearest-neighbor MH protocol is used. To get a
lower bound on the capacity scaling, the
signal-to-interference-and-noise ratio (SINR) seen by receiver
$i\in\{1,\cdots,n\}$ is computed as a function of the absorption
$a(f)$ and the PSD $N(f)$ of noise $n_i$. Since the Gaussian is the
worst additive noise~\cite{Medard:00,DiggaviCover:01}, assuming it
lower-bounds the throughput. Hence, by assuming full CSI at the
receiver, from (\ref{EQ:af}), (\ref{EQ:signal}), and (\ref{EQ:hki}),
the achievable throughput per S--D pair is lower-bounded by
\begin{eqnarray}
&& \log (1+\text{SINR}) \nonumber\\
\ge\!\!\!\!\!\!\!&&\log\left(1+\frac{P/(c_0a(f))}{N(f)+c_8/a(f)}\right)
 \nonumber\\
\ge\!\!\!\!\!\!\!&&\log \left(1+\frac{c_9P}{a(f)N(f)}\right),
\label{EQ:SINR} \nonumber
\end{eqnarray}
for some positive constants $c_0$, $c_8$, and $c_9$ independent of
$n$, where the second inequality is obtained from the relationship
(\ref{EQ:Nfaf}) between $a(f)$ and $N(f)$, resulting in
$N(f)=\Omega\left(1/a(f)\right)$. Due to the fact that
$\log(1+x)=(\log e)x+O(x^2)$ for small $x>0$, the rate of
\begin{equation}
\Omega\left(\frac{1}{a(f)N(f)}\right) \nonumber
\end{equation}
is thus provided for each S--D pair. Since the number of hops per
S--D pair is given by $O(\sqrt{n})$, there exist $\Omega(\sqrt{n})$
source nodes that can be active simultaneously, and therefore the
total throughput is finally given by (\ref{EQ:achievablerate}),
which completes the proof of the theorem.
\end{proof}

Now it is examined how the upper bound shown in
Section~\ref{SEC:Upperex} is close to the achievable throughput
scaling.

\begin{remark}
Based on Theorems~\ref{TH:Cutset} and~\ref{TH:achievable}, it is
easy to see that the achievable rate and the upper bound are of
exactly the same order. MH is therefore order-optimal in regular
networks with unit node density for all the attenuation regimes.
\end{remark}

We also remark that applying the hierarchical cooperation
strategy~\cite{OzgurLevequeTse:07} may not be helpful to improve the
achievable throughput due to long-range MIMO transmissions, which
severely degrade performance in highly power-limited
networks.\footnote{In wireless radio networks of unit node density,
the hierarchical cooperation provides a near-optimal throughput
scaling for the operating regimes $2<\alpha<3$, where $\alpha$
denotes the path-loss exponent that is greater than
2~\cite{OzgurLevequeTse:07}. Note that the analysis
in~\cite{OzgurLevequeTse:07} is valid under the assumption that
$\alpha$ is kept at the same value on all levels of hierarchy.} To
be specific, at the top level of the hierarchy, the transmissions
between two clusters having distance $O(\sqrt{n})$ become a
bottleneck, and thus cause a significant throughput degradation. It
is further seen that even with the random phase model, which may
enable us to obtain enough DoF gain, the benefit of randomness
cannot be exploited because of the power limitation.

\subsection{Dense Networks}

From the converse result in Section~\ref{SEC:upperdense}, it is seen
that in dense networks, there exists either a bandwidth or power
limitation, or both, according to the path-loss attenuation regimes.
Based on the earlier
studies~\cite{GuptaKumar:00,OzgurLevequeTse:07,OzgurJohariTseLeveque:10,ShinJeonDevroyeVuChungLeeTarokh:08}
for wireless radio networks, it follows that using MH routing is
preferred at power-limited regimes, while the HC strategy may have a
better performance at bandwidth-limited regimes. Thus, existing
schemes need to be used carefully, depending on operating regimes.

In this subsection, the nearest-neighbor MH routing
in~\cite{GuptaKumar:00} is described with a slight modification. The
basic procedure of the MH protocol under our dense regular network
is similar to the extended network case, and is briefly described as
follows:

\begin{itemize}
\item Divide the network into $n$ square routing cells, each of
which has the same area.
\item Draw a line connecting an S--D pair.
\item At each node, use the transmit power of
\begin{equation}
P\min\left\{1,\frac{a(f)^{1/\sqrt{n}}N(f)}{n^{\alpha/2}}\right\}.
\nonumber
\end{equation}
\end{itemize}

The scheme operates with the full power when
$a(f)=\Omega\left(\frac{n^{\alpha\sqrt{n}/2}}{N(f)^{\sqrt{n}}}\right)$.
On the other hand, when
$a(f)=o\left(\frac{n^{\alpha\sqrt{n}/2}}{N(f)^{\sqrt{n}}}\right)$,
the transmit power $Pa(f)^{1/\sqrt{n}}N(f)/n^{\alpha/2}$, which
scales slower than $\Theta(1)$, is sufficient so that the received
SNR at each node is bounded by 1 (note that having a higher power is
unnecessary in terms of throughput improvement).

The amount of interference is now quantified to show the achievable
throughput based on MH.

\begin{lemma} \label{LEM:interferencedense}
Consider a dense regular network that uses the nearest-neighbor MH
protocol. Then, the total interference power $P_I$ from other
simultaneously transmitting nodes, corresponding to the set
$I\subset\{1,\cdots,n\}$, is bounded by
\begin{eqnarray} \label{EQ:PIdense}
P_I= \left\{\begin{array}{lll}
O\left(\frac{\max\left\{n^{(1/2-\beta)(2-\alpha)},\log
n\right\}}{n^{\beta a_5/2}}\right) &\textrm{if $0\le\beta<1/2$}
\\ O\left(n^{-\beta
a_5/2}\right) &\textrm{if
$\beta=1/2$} \\
O\left(\frac{n^{\alpha/2}}{(1+\epsilon_0)^{n^{\beta-1/2}}}\right)
&\textrm{if $\beta>1/2$}
\end{array}\right.
\end{eqnarray}
for an arbitrarily small $\epsilon_0>0$, where $a_5$ and $\beta$ are
defined in (\ref{EQ:Nfeq}) and (\ref{EQ:afepsilon}), respectively.
\end{lemma}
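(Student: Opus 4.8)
The plan is to follow the same layering argument used for extended networks in the proof of Lemma~\ref{LEM:interference}, adapting it to the dense geometry and to the regime-dependent transmit power. First I would note that around any receiver the $k$-th layer $l_k$ of interfering routing cells still contains $\Theta(k)$ simultaneously active nodes (the $8k$ count of Fig.~\ref{FIG:layer} is unchanged), but now neighboring cells are $1/\sqrt{n}$ apart, so an interferer in $l_k$ lies at distance $r_k=\Theta(k/\sqrt{n})$. Denoting the common transmit power by $P'=P\min\{1,a(f)^{1/\sqrt{n}}N(f)/n^{\alpha/2}\}$ and using (\ref{EQ:af}), the total interference is bounded by
\begin{equation}
P_I\le\sum_{k=1}^{\Theta(\sqrt{n})}(8k)\frac{P'}{c_0\,(k/\sqrt{n})^{\alpha}a(f)^{k/\sqrt{n}}}=O\!\left(P'n^{\alpha/2}\sum_{k=1}^{\sqrt{n}}\frac{k^{1-\alpha}}{(1+\epsilon_0)^{k n^{\beta-1/2}}}\right),\nonumber
\end{equation}
where the last step uses (\ref{EQ:afepsilon}) so that $a(f)^{k/\sqrt{n}}=\Theta((1+\epsilon_0)^{k n^{\beta-1/2}})$.

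The second step is to pin down the prefactor $P'n^{\alpha/2}$ in each regime. For $\beta>1/2$ the factor $a(f)^{1/\sqrt{n}}=\Theta((1+\epsilon_0)^{n^{\beta-1/2}})$ grows faster than any polynomial, so the minimum equals $1$, i.e.\ $P'=P$ (full power); the weight $(1+\epsilon_0)^{n^{\beta-1/2}}=\omega(1)$ then makes the sum a convergent geometric series dominated by its $k=1$ term, yielding $P_I=O(n^{\alpha/2}/(1+\epsilon_0)^{n^{\beta-1/2}})$, the third case. For $\beta\le1/2$ one has $a(f)^{1/\sqrt{n}}=\Theta(1)$, so $a(f)^{1/\sqrt{n}}N(f)/n^{\alpha/2}=o(1)$, the minimum equals this reduced power, and the prefactor collapses to $P'n^{\alpha/2}=\Theta(a(f)^{1/\sqrt{n}}N(f))=\Theta(N(f))=\Theta(n^{-\beta a_5/2})$ by (\ref{EQ:Nfaf}) together with $\log a(f)=\Theta(n^{\beta})$. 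The $n^{-\beta a_5/2}$ factor appearing in both cases with $\beta\le1/2$ thus comes entirely from the noise PSD.

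The crux is evaluating $S:=\sum_{k=1}^{\sqrt{n}}k^{1-\alpha}(1+\epsilon_0)^{-kn^{\beta-1/2}}$ for $\beta\le1/2$, where the geometric weight is near one until the cutoff $k\approx M:=n^{1/2-\beta}$ (there $kn^{\beta-1/2}\approx1$) and decays beyond it. I would split $S$ at $k=M$. Since $1-\alpha\in[-1,0]$, the head $\sum_{k\le M}k^{1-\alpha}$ is of order $M^{2-\alpha}=n^{(1/2-\beta)(2-\alpha)}$ when $\alpha<2$ and is the harmonic partial sum $\sum_{k\le M}1/k=\Theta(\log n)$ when $\alpha=2$, unified as $O(\max\{n^{(1/2-\beta)(2-\alpha)},\log n\})$. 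The main obstacle is the tail: for $\beta<1/2$ the ratio $(1+\epsilon_0)^{-n^{\beta-1/2}}$ tends to one (its exponent $n^{\beta-1/2}\to0$), so the geometric tail $\sum_{k>M}(1+\epsilon_0)^{-kn^{\beta-1/2}}$ carries an extra factor $\Theta(n^{1/2-\beta})$; bounding $k^{1-\alpha}\le M^{1-\alpha}$ on the tail then gives $O(M^{1-\alpha}n^{1/2-\beta})=O(n^{(1/2-\beta)(2-\alpha)})$, which matches the head, so the two together reproduce the first case. For $\beta=1/2$ the cutoff is $M=\Theta(1)$ and the weight $(1+\epsilon_0)^{-k}$ makes $S=\Theta(1)$; multiplying by $\Theta(N(f))=\Theta(n^{-\beta a_5/2})$ then yields the second case. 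The one delicate point for order-tightness is controlling the near-unit geometric ratio, which I would handle via $1-(1+\epsilon_0)^{-n^{\beta-1/2}}=\Theta(n^{\beta-1/2})$.
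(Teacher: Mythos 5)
Your proposal is correct and takes essentially the same route as the paper's proof: the identical layering bound $P_I=O\bigl(P'n^{\alpha/2}\sum_{k=1}^{\sqrt{n}}k^{1-\alpha}(1+\epsilon_0)^{-kn^{\beta-1/2}}\bigr)$, the same regime-dependent collapse of the prefactor to $\Theta(N(f))=\Theta(n^{-\beta a_5/2})$ for $\beta\le 1/2$ versus full power for $\beta>1/2$, and the same split of the sum at the cutoff $k\approx n^{1/2-\beta}$. The only cosmetic difference is that you control the tail via the closed-form geometric series with the ratio estimate $1-(1+\epsilon_0)^{-n^{\beta-1/2}}=\Theta(n^{\beta-1/2})$, whereas the paper uses an integral comparison with a change of variables; both produce the same $\Theta(n^{1/2-\beta})$ factor and hence the same bound.
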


The proof of this lemma is presented in
Appendix~\ref{PF:interferencedense}. From (\ref{EQ:Nfaf}) and
(\ref{EQ:afepsilon}), we note that when $\beta=1/2$, it follows that
$P_I=O(N(f))$, i.e., $P_I$ is upper-bounded by the PSD $N(f)$ of
noise. Using Lemma~\ref{LEM:interferencedense}, a lower bound on the
capacity scaling can be derived, and hence the following result
shows the achievable rates under the MH protocol in a dense network.

\begin{theorem} \label{THM:Tndenselower}
In an underwater regular network of unit area,
\begin{eqnarray} \label{EQ:Tndenselower}
T(n)=\left\{\begin{array}{lll}
\Omega\left(\frac{\sqrt{n}}{\max\left\{n^{(1/2-\beta)(2-\alpha)},\log
n\right\}}\right) &\textrm{if $0\le\beta<1/2$}
\\ \Omega\left(\sqrt{n}\right) &\textrm{if
$\beta=1/2$} \\
\Omega\left(\frac{n^{(1+\alpha+\beta
a_5)/2}}{(1+\epsilon_0)^{n^{\beta-1/2}}}\right) &\textrm{if
$\beta>1/2$}
\end{array}\right.
\end{eqnarray}
is achievable.
\end{theorem}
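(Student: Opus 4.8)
The plan is to follow the achievability argument for extended networks (Theorem~\ref{TH:achievable}), now carrying the regime-dependent transmit power and the interference bound of Lemma~\ref{LEM:interferencedense}. First I would fix the modified nearest-neighbor MH protocol and compute the received signal power at a relay from its neighbor at distance $1/\sqrt{n}$. By (\ref{EQ:af}) and (\ref{EQ:hki}) the corresponding power gain is $1/A(1/\sqrt{n},f)=\Theta\!\left(n^{\alpha/2}/a(f)^{1/\sqrt{n}}\right)$. Multiplying by the scheme's power $P\min\{1,a(f)^{1/\sqrt{n}}N(f)/n^{\alpha/2}\}$ and using $a(f)^{1/\sqrt{n}}=\Theta((1+\epsilon_0)^{n^{\beta-1/2}})$ from (\ref{EQ:afepsilon}), I would observe that the power threshold sits exactly at $\beta=1/2$: for $0\le\beta\le1/2$ the bracketed minimum is attained by $a(f)^{1/\sqrt{n}}N(f)/n^{\alpha/2}$, the received signal power collapses to $\Theta(N(f))$, and the received SNR is $\Theta(1)$; for $\beta>1/2$ the minimum equals $1$, so the received signal power is $\Theta\!\left(n^{\alpha/2}/a(f)^{1/\sqrt{n}}\right)$.

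Next, since the Gaussian is the worst-case additive noise~\cite{Medard:00,DiggaviCover:01}, treating the aggregate interference as Gaussian lower-bounds the per-hop rate $\log(1+\text{SINR})$ with $\text{SINR}=\text{(signal)}/(N(f)+P_I)$, where $P_I$ is controlled by Lemma~\ref{LEM:interferencedense}. The heart of the argument is to weigh $P_I$ against $N(f)=\Theta(n^{-\beta a_5/2})$, obtained from (\ref{EQ:Nfaf}) and (\ref{EQ:afepsilon}), in each of the three regimes. For $0\le\beta<1/2$, the lemma gives $P_I=O\!\left(\max\{n^{(1/2-\beta)(2-\alpha)},\log n\}\cdot n^{-\beta a_5/2}\right)$, which dominates $N(f)$; the denominator is therefore of this same order, and dividing the $\Theta(n^{-\beta a_5/2})$ signal power by it cancels the common $n^{-\beta a_5/2}$ factor to give $\text{SINR}=\Omega\!\left(1/\max\{n^{(1/2-\beta)(2-\alpha)},\log n\}\right)$. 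For $\beta=1/2$, the lemma gives $P_I=O(N(f))$, so noise and interference are comparable and $\text{SINR}=\Theta(1)$. For $\beta>1/2$, the exponential factor $(1+\epsilon_0)^{n^{\beta-1/2}}$ forces $P_I=o(N(f))$, the noise dominates, and $\text{SINR}=\Omega\!\left(n^{\alpha/2}/(a(f)^{1/\sqrt{n}}N(f))\right)$.

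Finally, in every regime the SINR is $O(1)$, so I would convert it to a per-S--D-pair rate via $\log(1+x)=(\log e)x+O(x^2)$ for small $x$ (and simply via $\log(1+\Theta(1))=\Theta(1)$ when $\beta=1/2$). Exactly as in Theorem~\ref{TH:achievable}, since each S--D route comprises $O(\sqrt{n})$ hops, $\Omega(\sqrt{n})$ sources can be scheduled simultaneously, so multiplying the per-pair rate by this $\Omega(\sqrt{n})$ factor and substituting $N(f)=\Theta(n^{-\beta a_5/2})$ and $a(f)^{1/\sqrt{n}}=\Theta((1+\epsilon_0)^{n^{\beta-1/2}})$ reproduces the three claimed expressions for $T(n)$. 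I expect the main obstacle to be the first regime: one must correctly recognize that the \emph{interference}, not the noise, sets the SINR, and track the $\max\{n^{(1/2-\beta)(2-\alpha)},\log n\}$ term cleanly through the cancellation of the common $n^{-\beta a_5/2}$ factor, while also verifying that the reduced-power setting indeed keeps every received SNR at $\Theta(1)$ so that the per-hop reliability claim is justified.
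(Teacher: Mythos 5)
Your proposal is correct and takes essentially the same route as the paper's own proof: the same modified MH protocol, the same evaluation of the received power with the regime split at $\beta=1/2$ via the $\min$ in the power-control rule, the same comparison of $P_I$ from Lemma~\ref{LEM:interferencedense} against $N(f)=\Theta(n^{-\beta a_5/2})$ to decide whether interference or noise dominates the SINR denominator in each regime, and the same final step of linearizing $\log(1+x)$ and multiplying by the $\Omega(\sqrt{n})$ simultaneously active S--D pairs.
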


\begin{proof}
Suppose that the nearest-neighbor MH protocol described above is
used. Then, from (\ref{EQ:af}), the received signal power $P_r$ from
the desired transmitter is given by
\begin{eqnarray}
P_r =
\frac{P\min\left\{1,\frac{a(f)^{1/\sqrt{n}}N(f)}{n^{\alpha/2}}\right\}n^{\alpha/2}}{c_0a(f)^{1/\sqrt{n}}},
\nonumber
\end{eqnarray}
which can be rewritten as
\begin{eqnarray} \label{EQ:Prdense}
&&\frac{Pn^{\alpha/2}}{c_0(1+\epsilon_0)^{n^{\beta-1/2}}}\min\left\{1,\frac{(1+\epsilon_0)^{n^{\beta-1/2}}}{n^{(\alpha+\beta
a_5)/2}}\right\} \nonumber\\ =\!\!\!\!\!\!\!&&
\left\{\begin{array}{lll} \frac{P}{c_0n^{\beta a_5/2}} &\textrm{if
$0\le\beta\le1/2$}
\\ \frac{Pn^{\alpha/2}}{c_0(1+\epsilon_0)^{n^{\beta-1/2}}} &\textrm{if
$\beta>1/2$}
\end{array}\right.
\end{eqnarray}
with respect to the parameter $\beta$ using (\ref{EQ:Nfaf}) and
(\ref{EQ:afepsilon}). A lower bound on the throughput is now
obtained using (\ref{EQ:PIdense}) and (\ref{EQ:Prdense}). By
assuming the worst case noise, which lower-bounds the transmission
rate, and full CSI at the receiver, the achievable throughput per
S--D pair is then lower-bounded by
\begin{eqnarray}
&&\log(1+\text{SINR}) \nonumber\\ =\!\!\!\!\!\!\!&&
\log\left(1+\frac{P_r}{N(f)+P_I}\right) \nonumber\\
=\!\!\!\!\!\!\!&& \left\{\begin{array}{lll}
\Omega\left(\log\left(1+\frac{1}{\max\left\{n^{(1/2-\beta)(2-\alpha)},\log
n\right\}}\right)\right) &\textrm{if $0\le\beta<1/2$}
\\ \Omega(1) &\textrm{if
$\beta=1/2$} \\
\Omega\left(\log\left(1+\frac{n^{(\alpha+\beta
a_5)/2}}{(1+\epsilon_0)^{n^{\beta-1/2}}}\right)\right) &\textrm{if
$\beta>1/2$}
\end{array}\right. \nonumber\\
=\!\!\!\!\!\!\!&& \left\{\begin{array}{lll}
\Omega\left(\frac{1}{\max\left\{n^{(1/2-\beta)(2-\alpha)},\log
n\right\}}\right) &\textrm{if $0\le\beta<1/2$}
\\ \Omega(1) &\textrm{if
$\beta=1/2$} \\
\Omega\left(\frac{n^{(\alpha+\beta
a_5)/2}}{(1+\epsilon_0)^{n^{\beta-1/2}}}\right) &\textrm{if
$\beta>1/2$,}
\end{array}\right. \nonumber
\end{eqnarray}
where the second equality holds since $N(f)=\Theta(n^{-\beta
a_5/2})$ and thus $P_I=\omega(N(f))$ for $0\le\beta<1/2$,
$P_I=\Theta(P_r)=\Theta(N(f))$ for $\beta=1/2$, and $P_I=o(N(f))$
for $\beta>1/2$. The last equality comes from the fact that
$\log(1+x)=(\log e)x+O(x^2)$ for small $x>0$. Since there are
$\Omega(\sqrt{n})$ S--D pairs that can be active simultaneously in
the network, the total throughput is finally given by
(\ref{EQ:Tndenselower}), which completes the proof.
\end{proof}

Note that the achievable throughput~\cite{GuptaKumar:00} for
wireless radio networks of unit area using MH routing is given by
$\Omega(\sqrt{n})$, which is the same as the case for which
$\beta=1/2$ (or equivalently
$a(f)=\Theta\left((1+\epsilon_0)^{\sqrt{n}}\right)$) in a dense
underwater network. The lower bound on the total throughput $T(n)$
is also shown in Fig.~\ref{FIG:bounds} according to the parameter
$\beta$. From this result, an interesting observation follows. To be
specific, in the regime $0\le\beta\le1/2$, the lower bound on $T(n)$
grows linearly with increasing $\beta$, because the total
interference power $P_I$ in (\ref{EQ:PIdense}) tends to decrease as
$\beta$ increases. In this regime, note that $P_I=\Omega(P_r)$.
Meanwhile, when $\beta>1/2$, the lower bound reduces rapidly due to
the exponential path-loss attenuation in terms of increasing
$\beta$.

In addition, similarly as in Section~\ref{SEC:upperdense}, two other
expressions on the achievability result are now summarized as in the
following.

\begin{remark}
From (\ref{EQ:Nfaf}) and (\ref{EQ:afepsilon}), the lower bound on
the throughput $T(n)$ and the corresponding operating regimes can
also be presented below in terms of the attenuation parameter
$a(f)$:
\begin{eqnarray}
T(n)=\left\{\begin{array}{lll}
\Omega\left(\frac{\sqrt{n}}{\max\left\{a(f)^{(2-\alpha)/\sqrt{n}},\log
n\right\}}\right) &\textrm{if $a(f)=\Omega(1)$ and
$a(f)=o\left((1+\epsilon_0)^{\sqrt{n}}\right)$}
\\ \Omega\left(\sqrt{n}\right) &\textrm{if
$a(f)=\Theta\left((1+\epsilon_0)^{\sqrt{n}}\right)$} \\
\Omega\left(\frac{n^{(1+\alpha)/2}(\log
a(f))^{a_5/2}}{a(f)^{1/\sqrt{n}}}\right) &\textrm{if
$a(f)=\omega\left((1+\epsilon_0)^{\sqrt{n}}\right)$.}
\end{array}\right. \nonumber
\end{eqnarray}
Furthermore, using (\ref{EQ:afTheta}) and (\ref{EQ:NfTheta}) follows
that
\begin{eqnarray}
T(n)=\left\{\begin{array}{lll}
\Omega\left(\frac{\sqrt{n}}{\max\left\{e^{c_1(2-\alpha)f^2/\sqrt{n}},\log
n\right\}}\right) &\textrm{if $f=\Omega(1)$ and
$f=o\left(n^{1/4}\right)$}
\\ \Omega\left(\sqrt{n}\right) &\textrm{if
$f=\Theta\left(n^{1/4}\right)$} \\
\Omega\left(\frac{n^{(1+\alpha)/2}f^{a_5}}{e^{c_1f^2/\sqrt{n}}}\right)
&\textrm{if $f=\omega\left(n^{1/4}\right)$,}
\end{array}\right. \nonumber
\end{eqnarray}
which represents the lower bound for the operating regimes obtained
from the relationship between the frequency $f$ and the number $n$
of nodes.
\end{remark}

Now let us turn to examining how the upper bound shown in
Section~\ref{SEC:upperdense} is close to the achievable throughput
scaling.

\begin{remark}
Based on Theorems~\ref{THM:upperdense} and~\ref{THM:Tndenselower},
it is seen that if $\beta\ge1/2$, then the achievable rate of the MH
protocol is close to the upper bound up to $n^{\epsilon}$ for an
arbitrarily small $\epsilon>0$ (note that the two bounds are of
exactly the same order especially for $\beta>1/2$). The condition
$\beta\ge1/2$ corresponds to the high path-loss attenuation regime,
and is equivalent to
$a(f)=\Omega\left((1+\epsilon_0)^{\sqrt{n}}\right)$ or
$f=\Omega\left(n^{1/4}\right)$. Therefore, the MH is order-optimal
in regular networks of unit area under the aforementioned operating
regimes, whereas in extended networks, using MH routing results in
the order optimality for all the operating regimes.
\end{remark}

Finally, we remark that applying the HC
strategy~\cite{OzgurLevequeTse:07} does not guarantee the order
optimality in the regime $0\le\beta<1/2$ (i.e., low and medium
path-loss attenuation regimes). The primary reason is specified
under each operating regime: for the condition $\beta=0$, following
the steps similar to those of Lemma~\ref{LEM:interferencedense}, it
follows that $P_I=\omega(P_r)$ at all levels of the hierarchy,
thereby resulting in $\text{SINR}=o(1)$ for each transmission (the
details are not shown in this paper). It is thus not possible to
achieve a linear throughput scaling. Now let us focus on the case
where $0<\beta<1/2$. At the top level of the hierarchy, the
transmissions between two clusters having distance $O(1)$ becomes a
bottleneck because of the exponential path-loss attenuation with
propagation distance. Hence, the achievable throughput of the HC
decays exponentially with respect to $n$, which is significantly
lower than that in (\ref{EQ:Tndenselower}).


\section{Extension to Random Networks} \label{SEC:Random}

In this section, we would like to mention a random network
configuration, where $n$ S--D pairs are uniformly and independently
distributed on a square of unit node density (i.e., an extended
random network).

We first discuss an upper bound for extended random networks. A
precise upper bound can be obtained using the binning argument
of~\cite{OzgurLevequeTse:07} (refer to Appendix V
in~\cite{OzgurLevequeTse:07} for the details). Consider the same cut
$L$, which divides the network area into two halves, as in the
regular network case. For analytical convenience, we can
artificially assume the empty zone $E_L$, in which there are no
nodes in the network, consisting of a rectangular slab of width
$0<\bar{c}<\frac{1}{\sqrt{7}e^{1/4}}$, independent of $n$,
immediately to the right of the centerline (cut), as done
in~\cite{OzgurJohariTseLeveque:10} (see
Fig.~\ref{FIG:displacement2}).\footnote{Although this assumption
does not hold in our random configuration, it is shown
in~\cite{OzgurJohariTseLeveque:10} that there exists a vertical cut
such that there are no nodes located closer than
$0<\bar{c}<\frac{1}{\sqrt{7}e^{1/4}}$ on both sides of this cut when
we allow a cut that is not necessarily linear. Such an existence is
proved by using percolation
theory~\cite{MeesterRoy:96,FranceschettiDouseTseThiran:07}. This
result can be directly applied to our network model since it only
relies on the node distribution but not the channel characteristics.
Hence, removing the assumption does not cause any change in
performance.} Let us state the following lemma.

\begin{lemma} \label{LEM:nodenum}
Assume a two dimensional extended network where $n$ nodes are
uniformly distributed. When the network area is divided into $n$
squares of unit area, there are fewer than $\log n$ nodes in each
square with high probability.
\end{lemma}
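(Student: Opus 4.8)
The plan is to treat this as a standard occupancy (balls-into-bins) problem. Since the $n$ nodes are placed uniformly and independently on the square of area $n$, and that square is partitioned into $n$ cells of unit area, each node lands in any fixed cell with probability $1/n$, independently of the others. Hence the number $N_j$ of nodes in cell $j$ is a Binomial$(n,1/n)$ random variable with mean $1$. First I would bound the upper tail $\Pr[N_j \ge \log n]$ for a single cell, and then apply a union bound over all $n$ cells to control the maximum occupancy across the network.

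For the single-cell tail, I would use the elementary estimate $\Pr[N_j \ge k] \le \binom{n}{k}(1/n)^k \le 1/k! \le (e/k)^k$, valid for any positive integer $k$; the middle inequality follows since $\binom{n}{k}(1/n)^k = \tfrac{1}{k!}\prod_{i=0}^{k-1}\tfrac{n-i}{n}\le \tfrac{1}{k!}$, and the last from $k!\ge(k/e)^k$. Setting $k=\log n$ gives $\Pr[N_j \ge \log n] \le (e/\log n)^{\log n}$. A Chernoff bound on the binomial tail would serve equally well here.

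Applying the union bound over the $n$ cells, the probability that some cell holds at least $\log n$ nodes is at most $n(e/\log n)^{\log n}$. Taking logarithms, $\ln\!\big(n(e/\log n)^{\log n}\big)= \ln n + (\log n)(1-\ln\log n)$, and since $\ln\log n \to \infty$ the term $-(\log n)\ln\log n$ dominates, driving the whole expression to $-\infty$. Thus $n(e/\log n)^{\log n}\to 0$, and so with high probability every cell contains fewer than $\log n$ nodes, as claimed.

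The only delicate point is that the per-cell failure probability must decay fast enough to survive multiplication by the number $n$ of cells in the union bound; this is exactly where the superpolynomial decay of $(e/\log n)^{\log n}$---faster than any fixed power of $1/n$ because of the $\ln\log n$ factor---is essential, and it is what forces the threshold to be as large as $\log n$ rather than a constant.
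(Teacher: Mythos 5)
Your proof is correct. The paper gives no self-contained argument for this lemma---it notes that the result depends only on the node distribution and defers entirely to the proof in~\cite{FranceschettiDouseTseThiran:07}---and your argument (binomial occupancy with the tail bound $\Pr[N_j\ge k]\le\binom{n}{k}n^{-k}\le 1/k!\le (e/k)^k$ at $k=\log n$, followed by a union bound over the $n$ unit cells) is exactly the standard balls-into-bins proof used in that reference, so the two approaches coincide.
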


Since the result in Lemma~\ref{LEM:nodenum} depends on the node
distribution but not the channel characteristics, the proof
essentially follows that presented
in~\cite{FranceschettiDouseTseThiran:07}. By
Lemma~\ref{LEM:nodenum}, we now take into account the network
transformation resulting in a regular network with at most $\log n$
and $2\log n$ nodes, on the left and right, respectively, at each
square vertex except for the empty zone (see
Fig.~\ref{FIG:displacement2}). Then, the nodes in each square are
moved together onto one vertex of the corresponding square. More
specisely, under the cut $L$, the node displacement is performed in
the sense of decreasing the Euclidean distance between source node
$i\in S_L$ and the corresponding destination $k\in D_L$, as shown in
Fig.~\ref{FIG:displacement2}, which will provide an upper bound on
$P_L^{(k)}$ in (\ref{EQ:PL}). It is obviously seen that the amount
of power transfer under the transformed regular network is greater
than that under another regular network with at most $\log n$ nodes
at each vertex, located at integer lattice positions in a square
region of area $n$. Hence, the upper bound for random networks is
boosted by at least a logarithmic factor of $n$ compared to that of
regular networks discussed in Section~\ref{SEC:Upper}.

Now we turn our attention to showing an achievable throughput for
extended random networks. In this case, the nearest-neighbor MH
protocol~\cite{GuptaKumar:00} can also be utilized since our network
is highly power-limited. Then, the area of each routing cell needs
to scale with $2\log n$ to guarantee at least one node in a
cell~\cite{GuptaKumar:00,ElGamalMammenPrabhakarShah:06}.\footnote{When
methods from percolation theory are applied to our random
network~\cite{MeesterRoy:96,FranceschettiDouseTseThiran:07}, the
routing area constructed during the highway phase is a certain
positive constant that is less than 1 and independent of $n$. The
distance in the draining and delivery phases, corresponding to the
first and last hops of a packet transmission, respectively, is
nevertheless given by some constant times $\log n$, thereby limiting
performance, especially for the condition $a(f)=\omega(1)$. Hence,
using the routing protocol in~\cite{FranceschettiDouseTseThiran:07}
indeed does not perform better than the conventional MH
case~\cite{GuptaKumar:00} in random networks.} Each routing cell
operates based on 9-time division multiple access to avoid causing
large interference to its neighboring
cells~\cite{GuptaKumar:00,ElGamalMammenPrabhakarShah:06}. For the MH
routing, since per-hop distance is given by $\Theta(\sqrt{\log n})$,
the received signal power from the intended transmitter and the SINR
seen by any receiver are expressed as
\begin{equation}
\frac{c_{10}P}{(\log n)^{\alpha/2}a(f)^{\delta\sqrt{\log n}}}
\nonumber
\end{equation}
and
\begin{equation}
\Omega\left(\frac{1}{(\log n)^{\alpha/2}a(f)^{\delta\sqrt{\log
n}}N(f)}\right), \nonumber
\end{equation}
respectively, for some constants $c_{10}>0$ and $\delta\ge\sqrt{2}$
independent of $n$. Since the number of hops per S--D pair is given
by $O(\sqrt{n/\log n})$, there exist $\Omega(\sqrt{n/\log n})$
simultaneously active sources, and thus the total achievable
throughput $T(n)$ is finally given by
\begin{equation}
T(n)=\Omega\left(\frac{n^{1/2}}{(\log
n)^{(\alpha+1)/2}a(f)^{\delta\sqrt{\log n}}N(f)}\right) \nonumber
\end{equation}
for some constant $\delta\ge\sqrt{2}$ independent of $n$ (note that
this relies on the fact that $\log(1+x)$ can be approximated by $x$
for small $x>0$). Hence, using the MH protocol results in at least a
polynomial decrease in the throughput compared to the regular
network case shown in Section~\ref{SEC:Achievability}.\footnote{In
terrestrial radio channels, there is a logarithmic gap in the
achievable scaling laws between regular and random
networks~\cite{GuptaKumar:00,XieKumar:04}.} This comes from the fact
that the received signal power tends to be mainly limited due to
exponential attenuation with transmission distance
$\Theta(\sqrt{\log n})$. Note that in underwater networks,
randomness on the node distribution causes a huge performance
degradation on the throughput scaling. Therefore, we may conclude
that the existing MH scheme does not satisfy the order optimality
under extended random networks regardless of the attenuation
parameter $a(f)$.


\section{Conclusion} \label{SEC:Conc}

The attenuation parameter and the capacity scaling laws have been
characterized in a narrow-band channel of underwater acoustic
networks. Provided that the frequency $f$ scales relative to the
number $n$ of nodes, the information-theoretic upper bounds and the
achievable throughputs were obtained as functions of the attenuation
parameter $a(f)$ in regular networks. In extended networks, based on
the power transfer argument, the upper bound was shown to decrease
in inverse proportion to $a(f)$. In dense networks, the upper bound
was derived characterizing three different operating regimes, in
which there exists either a bandwidth or power limitation, or both.
In addition, to show the achievability result, the nearest-neighbor
MH protocol was introduced with a simple modification, and its
throughput scaling was analyzed. We proved that the MH protocol is
order-optimal in all operating regimes of extended networks and in
power-limited regimes (i.e., the case where the frequency $f$ scales
faster than or as $n^{1/4}$) of dense networks. Therefore, it turned
out that there exists a right frequency scaling that makes our
scaling results for underwater acoustic networks to break free from
scaling limitations related to the channel characteristics that were
described in~\cite{LucaniMedardStojanovic:08}. Our scaling results
were also extended to the random network scenario, where it was
shown that the conventional MH scheme does not satisfy the order
optimality for all the operating regimes.


\appendix

\section{Appendix}

\subsection{Proof of Lemma~\ref{LEM:proper}} \label{PF:proper}

The following definition is used to simply provide the proof.

{\em Definition 1~\cite{NeeserMassey:93}:} A complex random variable
$Y$ is said to be proper if $\tilde{\Sigma}_Y=0$, where
$\tilde{\Sigma}_Y$, called the pseudo-covariance, is given by
$E[(Y-E[Y])^2]$.

Since the $(k,i)$-th element of the channel matrix ${\bf H}_L$ is
given by (\ref{EQ:hki}), it follows that
\begin{eqnarray}
E\left[\left(h_{ki}-E[h_{ki}]\right)^2\right]=\frac{1}{A(r_{ki},f)}E\left[\left(e^{j\theta_{ki}}-E\left[e^{j\theta_{ki}}\right]\right)^2\right].
\nonumber
\end{eqnarray}
From the fact that
\begin{eqnarray}
E\left[e^{j\theta_{ki}}\right]=E\left[\cos(\theta_{ki})+j\sin(\theta_{ki})\right]=0
\nonumber
\end{eqnarray}
due to uniformly distributed $\theta_{ki}$ over $[0, 2\pi]$, we thus
have
\begin{eqnarray}
E\left[\left(h_{ki}-E[h_{ki}]\right)^2\right]\!\!\!\!\!\!\!&&=\frac{1}{A(r_{ki},f)}E\left[e^{j2\theta_{ki}}\right]
\nonumber\\
&&=\frac{1}{A(r_{ki},f)}E\left[\cos(2\theta_{ki})+j\sin(2\theta_{ki})\right] \nonumber\\
&&=0, \label{EQ:hkiproper}
\end{eqnarray}
which complete the proof, because (\ref{EQ:hkiproper}) holds for all
$i\in S_L$ and $k\in D_L$.


\subsection{Proof of Lemma~\ref{LEM:PL}} \label{PF:PL}

An upper bound on $P_L^{(k)}$ can be found by using the
node-indexing and layering techniques similar to those shown in
Section VI of~\cite{ShinJeonDevroyeVuChungLeeTarokh:08}. As
illustrated in Fig.~\ref{FIG:displacement}, layers are introduced,
where the $i$-th layer $l_i'$ of the network represents the ring
with width 1 drawn based on a destination node $k\in D_L$, whose
coordinate is given by $(k_x,k_y)$, where
$i\in\{1,\cdots,\sqrt{n}\}$. More precisely, the ring is enclosed by
the circumferences of two circles, each of which has radius $k_x+i$
and $k_x+i-1$, respectively, at its same center (see
Fig.~\ref{FIG:displacement}). Then from (\ref{EQ:PL}), the term
$P_{L}^{(k)}$ is given by
\begin{equation}
P_{L}^{(k)}=
\frac{P}{c_0}\sum_{i_x=1}^{\sqrt{n}/2}\sum_{i_y=1}^{\sqrt{n}}\frac{1}{\left((i_x+k_x-1)^2+(i_y-k_y)^2\right)^{\alpha/2}a(f)^{\sqrt{(i_x+k_x-1)^2+(i_y-k_y)^2}}}.
\nonumber
\end{equation}
It is further assumed that all the nodes in each layer are moved
onto the innermost boundary of the corresponding ring, which
provides an upper bound for $P_{L}^{(k)}$. From the fact that there
exist $\Theta(k_x+i)$ nodes in the layer $l_i'$ since the area of
$l_i'$ is given by $\pi(2k_x+2i-1)$, $P_{L}^{(k)}$ is then
upper-bounded by
\begin{eqnarray} \label{EQ:dLupp}
P_{L}^{(k)}\!\!\!\!\!\!\!
&&\le\frac{P}{c_0}\sum_{i'=k_x}^{\infty}\frac{c_{11}(i'+1)}{i'^{\alpha}a(f)^{i'}}
\nonumber\\
&&\le\frac{2c_{11}P}{c_0k_x^{\alpha-1}}\sum_{i'=k_x}^{\infty}\frac{1}{a(f)^{i'}}
\nonumber\\ &&\le
\frac{2c_{11}P}{c_0k_x^{\alpha-1}}\left(\frac{1}{a(f)^{k_x}}+\int_{k_x}^{\infty}\frac{1}{a(f)^x}dx\right) \nonumber\\
&& \le\frac{c_{12}P}{k_x^{\alpha-1}a(f)^{k_x}} \nonumber
\end{eqnarray}
for some positive constants $c_0$, $c_{11}$, and $c_{12}$
independent of $n$, where the fourth inequality holds since
$a(f)>1$, which finally yields (\ref{EQ:PLupper}). This completes
the proof.


\subsection{Proof of Lemma~\ref{LEM:PLdense}} \label{PF:PLdense}

Upper and lower bounds on $P_L^{(k)}$ in a dense network are derived
by basically following the same node indexing and layering
techniques as those in Appendix~\ref{PF:PL}. We refer to
Fig.~\ref{FIG:displacement} for the detailed description (note that
the destination nodes are, however, located at positions
$\left(\frac{k_x}{\sqrt{n}}, \frac{k_y}{\sqrt{n}}\right)$ in dense
networks). Similarly to the extended network case, from
(\ref{EQ:PL}), the term $P_{L}^{(k)}$ is then given by
\begin{equation}
P_{L}^{(k)}=
\frac{P}{c_0}\sum_{i_x=1}^{\sqrt{n}/2}\sum_{i_y=1}^{\sqrt{n}}\frac{n^{\alpha/2}}{\left((i_x+k_x-1)^2+(i_y-k_y)^2\right)^{\alpha/2}a(f)^{\sqrt{\left((i_x+k_x-1)^2+(i_y-k_y)^2\right)/n}}}.
\nonumber
\end{equation}

First, focus on how to obtain an upper bound for $P_{L}^{(k)}$.
Assuming that all the nodes in each layer are moved onto the
innermost boundary of the corresponding ring, we then have
\begin{eqnarray}
P_{L}^{(k)} \!\!\!\!\!\!\!&&\le
\frac{Pn^{\alpha/2}}{c_0}\sum_{i'=k_x}^{k_x+\sqrt{n}/2-1}\frac{c_{11}(i'+1)}{i'^{\alpha}a(f)^{i'\sqrt{n}}}
\nonumber\\ && \le
\frac{2c_{11}Pn^{\alpha/2}}{c_0}\sum_{i'=k_x}^{\sqrt{n}}\frac{1}{i'^{\alpha-1}a(f)^{i'\sqrt{n}}}
\nonumber\\ && =
c_{12}Pn^{\alpha/2}\sum_{i'=k_x}^{\sqrt{n}}\frac{1}{i'^{\alpha-1}\left(1+\epsilon_0\right)^{i'n^{\beta-1/2}}}
\label{EQ:PLupperapp}
\end{eqnarray}
for some positive constants $c_0$, $c_{11}$, and $c_{12}$
independent of $n$ and an arbitrarily small $\epsilon_0>0$, where
the equality comes from the relationship (\ref{EQ:afepsilon})
between $a(f)$ and $\beta$. We first consider the case where
$k_x=o\left(n^{1/2-\beta+\epsilon}\right)$ for an arbitrarily small
$\epsilon>0$.
Under this condition, from the fact that the term $i'^{\alpha-1}$ in
the RHS of (\ref{EQ:PLupperapp}) is dominant in terms of
upper-bounding $P_{L}^{(k)}$ for $i'=k_x,\cdots,\sqrt{n}$,
(\ref{EQ:PLupperapp}) is further bounded by
\begin{eqnarray}
P_{L}^{(k)} \!\!\!\!\!\!\!&&\le
c_{12}Pn^{\alpha/2}\sum_{i'=k_x}^{\sqrt{n}}\frac{1}{i'^{\alpha-1}}
\nonumber\\ && \le
c_{12}Pn^{\alpha/2}\left(\frac{1}{k_x^{\alpha-1}}+\int_{k_x}^{\sqrt{n}}\frac{1}{x^{\alpha-1}}dx\right),
\nonumber
\end{eqnarray}
which yields
$P_{L}^{(k)}=O\left(n^{\alpha/2}(\sqrt{n})^{2-\alpha}\right)=O(n)$
for $1\le\alpha<2$ and $P_{L}^{(k)}=O\left(n\log n\right)$ for
$\alpha=2$. When $k_x=\Omega(n^{1/2-\beta+\epsilon})$, the upper
bound (\ref{EQ:PLupperapp}) for $P_{L}^{(k)}$ is dominated by the
term $\left(1+\epsilon_0\right)^{i'n^{\beta-1/2}}$, and thus is
given by
\begin{eqnarray}
P_{L}^{(k)} \!\!\!\!\!\!\!&&\le
c_{12}Pn^{\alpha/2}\sum_{i'=k_x}^{\sqrt{n}}\frac{1}{\left(1+\epsilon_0\right)^{i'n^{\beta-1/2}}}
\nonumber\\ && \le
c_{12}Pn^{\alpha/2}\left(\frac{1}{\left(1+\epsilon_0\right)^{k_x
n^{\beta-1/2}}}+\int_{k_x}^{\sqrt{n}}\frac{1}{\left(1+\epsilon_0\right)^{x
n^{\beta-1/2}}}dx\right) \nonumber\\ && =
c_{12}Pn^{\alpha/2}\left(\frac{1}{\left(1+\epsilon_0\right)^{k_x
n^{\beta-1/2}}}+\int_{1}^{\sqrt{n}/k_x}\frac{k_x}{\left(1+\epsilon_0\right)^{x
k_x n^{\beta-1/2}}}dx\right) \nonumber\\ &&\le
c_{12}Pn^{\alpha/2}\left(\frac{1}{\left(1+\epsilon_0\right)^{k_x
n^{\beta-1/2}}}+\frac{n^{1/2-\beta}}{\left(1+\epsilon_0\right)^{k_x
n^{\beta-1/2}}}\right) \nonumber\\ &&\le
\frac{c_{13}Pn^{\alpha/2}}{\left(1+\epsilon_0\right)^{k_x
n^{\beta-1/2}}}\max\left\{1,n^{1/2-\beta}\right\}
\label{EQ:PLlowerthird}
\end{eqnarray}
for some constant $c_{13}>0$ independent of $n$, which is the last
result in (\ref{EQ:PLupperdense}).

Next, let us turn to deriving a lower bound for $P_{L}^{(k)}$. Since
each layer has at least one node that is onto the innermost boundary
of the corresponding ring, the lower bound similarly follows
\begin{eqnarray}
P_{L}^{(k)} \!\!\!\!\!\!\!&&\ge
\frac{Pn^{\alpha/2}}{c_0}\sum_{i'=k_x}^{k_x+\sqrt{n}/2-1}\frac{1}{i'^{\alpha}a(f)^{i'\sqrt{n}}}
\nonumber\\ && =
c_{12}Pn^{\alpha/2}\sum_{i'=k_x}^{k_x+\sqrt{n}/2-1}\frac{1}{i'^{\alpha}\left(1+\epsilon_0\right)^{i'n^{\beta-1/2}}}.
\label{EQ:PLlowerapp}
\end{eqnarray}
For the condition $k_x=o(n^{1/2-\beta+\epsilon})$,
(\ref{EQ:PLlowerapp}) is represented as
\begin{eqnarray}
P_{L}^{(k)} \!\!\!\!\!\!\!&&\ge
c_{12}Pn^{\alpha/2}\sum_{i'=k_x}^{k_x+\sqrt{n}/2-1}\frac{1}{i'^{\alpha}\left(1+\epsilon_0\right)^{i'n^{\beta-1/2}}}
\nonumber\\ &&\ge
c_{12}Pn^{\alpha/2}\sum_{i'=k_x}^{2k_x-1}\frac{1}{i'^{\alpha}\left(1+\epsilon_0\right)^{i'n^{\beta-1/2}}}
\nonumber\\ &&\ge
\frac{c_{12}Pn^{\alpha/2}}{n^{\epsilon'}}\sum_{i'=k_x}^{2k_x-1}\frac{1}{i'^{\alpha}}
\nonumber\\ && \ge
\frac{c_{14}Pn^{\alpha/2}}{n^{\epsilon'}k_x^{\alpha-1}} \nonumber
\end{eqnarray}
for an arbitrarily small $\epsilon'>0$ and some constant $c_{14}>0$
independent of $n$, where the third inequality holds due to
$\left(1+\epsilon_0\right)^{k_x n^{\beta-1/2}}=O(n^{\epsilon'})$. On
the other hand, similarly as in the steps of
(\ref{EQ:PLlowerthird}), the condition
$k_x=\Omega(n^{1/2-\beta+\epsilon})$ yields the following lower
bound for $P_{L}^{(k)}$:
\begin{eqnarray}
P_{L}^{(k)} \!\!\!\!\!\!\!&&\ge c_{12}P
\sum_{i'=k_x}^{k_x+\sqrt{n}/2-1}\frac{1}{\left(1+\epsilon_0\right)^{i'n^{\beta-1/2}}}
\nonumber\\ && \ge c_{12}P
\left(\frac{1}{\left(1+\epsilon_0\right)^{k_x
n^{\beta-1/2}}}+\int_{k_x+1}^{k_x+\sqrt{n}/2-1}\frac{1}{\left(1+\epsilon_0\right)^{x
n^{\beta-1/2}}}dx\right) \nonumber\\ &&\ge
c_{15}P\left(\frac{1}{\left(1+\epsilon_0\right)^{k_x
n^{\beta-1/2}}}+\frac{n^{1/2-\beta}}{\left(1+\epsilon_0\right)^{(k_x+1)
n^{\beta-1/2}}}\right) \nonumber\\ &&\ge
\frac{c_{15}P}{\left(1+\epsilon_0\right)^{k_x
n^{\beta-1/2}}}\max\left\{1,\frac{n^{1/2-\beta}}{\left(1+\epsilon_0\right)^{n^{\beta-1/2}}}\right\}
\nonumber
\end{eqnarray}
some constant $c_{15}>0$ independent of $n$, which finally complete
the proof of the lemma.


\subsection{Proof of Lemma~\ref{LEM:interferencedense}} \label{PF:interferencedense}

The layering technique illustrated in Fig.~\ref{FIG:layer} is
applied as in the extended network case. From (\ref{EQ:af}), the
total interference power $P_I$ at each node from simultaneously
transmitting nodes is then upper-bounded by
\begin{eqnarray}
P_I\!\!\!\!\!\!\!&&=\sum_{k=1}^{\sqrt{n}}(8k)\frac{P\min\left\{1,\frac{a(f)^{1/\sqrt{n}}N(f)}{n^{\alpha/2}}\right\}}{c_0(k/\sqrt{n})^\alpha
a(f)^{k/\sqrt{n}}} \nonumber\\ &&=
\frac{8Pn^{\alpha/2}\min\left\{1,\frac{a(f)^{1/\sqrt{n}}N(f)}{n^{\alpha/2}}\right\}}{c_0}\sum_{k=1}^{\sqrt{n}}\frac{1}{k^{\alpha-1}a(f)^{k/\sqrt{n}}}.
\label{EQ:interferencedense}
\end{eqnarray}
Using (\ref{EQ:Nfaf}) and (\ref{EQ:afepsilon}), the upper bound
(\ref{EQ:interferencedense}) on $P_I$ can be expressed as
\begin{eqnarray}
P_I\le\!\!\!\!\!\!\!&&
c_{16}Pn^{\alpha/2}\min\left\{1,\frac{(1+\epsilon_0)^{n^{\beta-1/2}}}{n^{(\alpha+\beta
a_5)/2}}\right\}\sum_{k=1}^{\sqrt{n}}\frac{1}{k^{\alpha-1}(1+\epsilon_0)^{kn^{\beta-1/2}}}
\nonumber\\ \le\!\!\!\!\!\!\!&& \left\{\begin{array}{lll}
\frac{c_{16}P}{n^{\beta
a_5/2}}\sum_{k=1}^{\sqrt{n}}\frac{1}{k^{\alpha-1}(1+\epsilon_0)^{kn^{\beta-1/2}}}
&\textrm{if $0\le\beta<1/2$}
\\ \frac{c_{16}P}{n^{\beta
a_5/2}}\sum_{k=1}^{\sqrt{n}}\frac{1}{(1+\epsilon_0)^{k}} &\textrm{if
$\beta=1/2$} \\
c_{16}Pn^{\alpha/2}\sum_{k=1}^{\sqrt{n}}\frac{1}{(1+\epsilon_0)^{kn^{\beta-1/2}}}
&\textrm{if $\beta>1/2$}
\end{array}\right. \nonumber\\ \le\!\!\!\!\!\!\!&& \left\{\begin{array}{lll}
\frac{c_{16}P}{n^{\beta
a_5/2}}\sum_{k=1}^{\sqrt{n}}\frac{1}{k^{\alpha-1}(1+\epsilon_0)^{kn^{\beta-1/2}}}
&\textrm{if $0\le\beta<1/2$}
\\ \frac{c_{16}P}{n^{\beta
a_5/2}}\frac{1}{(1+\epsilon_0)-1} &\textrm{if
$\beta=1/2$} \\
c_{16}Pn^{\alpha/2}\frac{1}{(1+\epsilon_0)^{n^{\beta-1/2}}-1}
&\textrm{if $\beta>1/2$}
\end{array}\right. \nonumber\\ \le\!\!\!\!\!\!\!&& \left\{\begin{array}{lll}
\frac{c_{16}P}{n^{\beta
a_5/2}}\sum_{k=1}^{\sqrt{n}}\frac{1}{k^{\alpha-1}(1+\epsilon_0)^{kn^{\beta-1/2}}}
&\textrm{if $0\le\beta<1/2$}
\\ \frac{c_{17}P}{n^{\beta
a_5/2}} &\textrm{if
$\beta=1/2$} \\
\frac{c_{17}Pn^{\alpha/2}}{(1+\epsilon_0)^{n^{\beta-1/2}}}
&\textrm{if $\beta>1/2$} \nonumber\\
\end{array}\right.
\end{eqnarray}
for some positive constants $c_{16}$ and $c_{17}$ independent of
$n$. Based on the argument in Appendix~\ref{PF:PLdense}, when
$0\le\beta<1/2$, it follows that
\begin{eqnarray}
\sum_{k=1}^{\sqrt{n}}\frac{1}{k^{\alpha-1}(1+\epsilon_0)^{kn^{\beta-1/2}}}
\!\!\!\!\!\!\!&&=
\sum_{k=1}^{n^{1/2-\beta}-1}\frac{1}{k^{\alpha-1}(1+\epsilon_0)^{kn^{\beta-1/2}}}
+
\sum_{k=n^{1/2-\beta}}^{\sqrt{n}}\frac{1}{k^{\alpha-1}(1+\epsilon_0)^{kn^{\beta-1/2}}}
\nonumber\\ && \le
\sum_{k=1}^{n^{1/2-\beta}-1}\frac{1}{k^{\alpha-1}} +
\frac{1}{n^{(1/2-\beta)(\alpha-1)}}\sum_{k=n^{1/2-\beta}}^{\sqrt{n}}\frac{1}{(1+\epsilon_0)^{kn^{\beta-1/2}}}
\nonumber\\ && \le
\left(1+\int_1^{n^{1/2-\beta}}\frac{1}{x^{\alpha-1}}dx\right)
\nonumber\\ && +
\frac{1}{n^{(1/2-\beta)(\alpha-1)}}\left(\frac{1}{1+\epsilon_0}+\int_{n^{1/2-\beta}}^{\sqrt{n}}\frac{1}{(1+\epsilon_0)^{xn^{\beta-1/2}}}dx\right)
\nonumber\\ && \le 2\int_1^{n^{1/2-\beta}}\frac{1}{x^{\alpha-1}}dx +
\frac{2}{n^{(1/2-\beta)(\alpha-1)}}\int_{1}^{n^\beta}\frac{n^{1/2-\beta}}{(1+\epsilon_0)^{x}}dx
\nonumber\\ && \le \left\{\begin{array}{lll}
4n^{(1/2-\beta)(2-\alpha)} &\textrm{if $1\le\alpha<2$}
\\ \log n &\textrm{if
$\alpha=2$,}
\end{array}\right. \nonumber
\end{eqnarray}
which results in
$P_I=O\left(\frac{\max\left\{n^{(1/2-\beta)(2-\alpha)},\log
n\right\}}{n^{\beta a_5/2}}\right)$. This completes the proof of the
lemma.


\newpage


\begin{figure}[t!]
  \begin{center}
  \leavevmode \epsfxsize=0.38\textwidth   
  \leavevmode 
  \epsffile{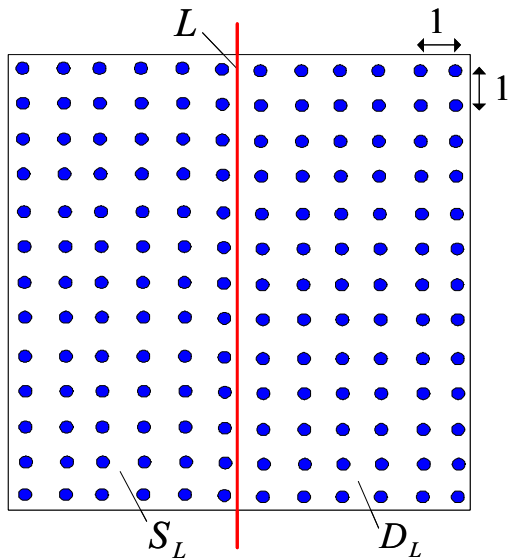}
  \caption{The cut $L$ in a two-dimensional extended regular network. $S_L$ and $D_L$ represent the sets of source and destination nodes, respectively.}
  \label{FIG:cut1}
  \end{center}
\end{figure}

\begin{figure}[t!]
  \begin{center}
  \leavevmode \epsfxsize=0.41 \textwidth   
  \leavevmode 
  \epsffile{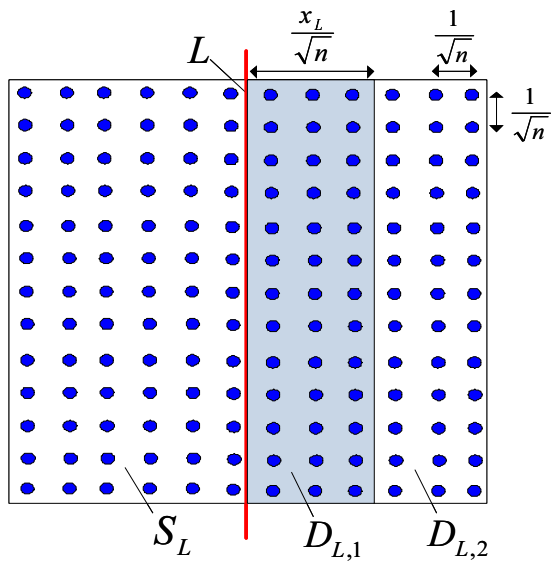}
  \caption{The cut $L$ in a two-dimensional dense regular network. $S_L$ and $D_L$ represent the sets of source and destination nodes, respectively, where $D_L$ is partitioned into two groups $D_{L,1}$ and $D_{L,2}$.}
  \label{FIG:cut_dense}
  \end{center}
\end{figure}

\begin{figure}[t!]
  \begin{center}
  \leavevmode \epsfxsize=0.56\textwidth   
  \leavevmode 
  \epsffile{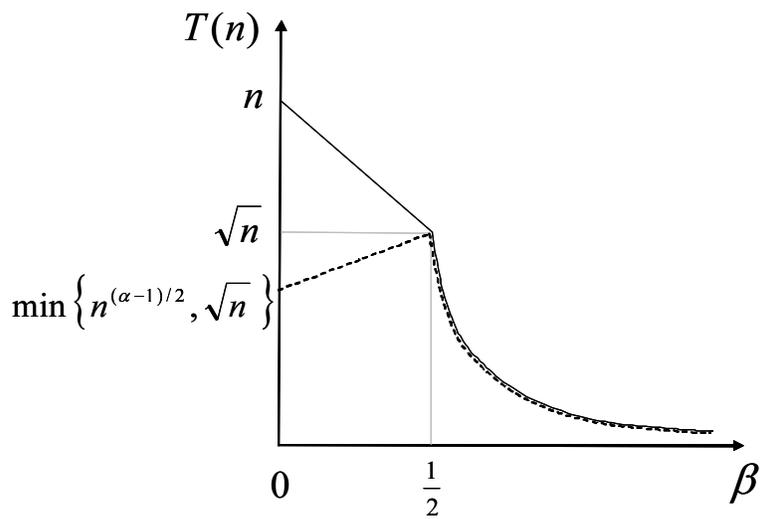}
  \caption{Upper (solid) and lower (dashed) bounds on the capacity scaling $T(n)$.}
  \label{FIG:bounds}
  \end{center}
\end{figure}

\begin{figure}[t!]
  \begin{center}
  \leavevmode \epsfxsize=0.55\textwidth   
  \leavevmode 
  \epsffile{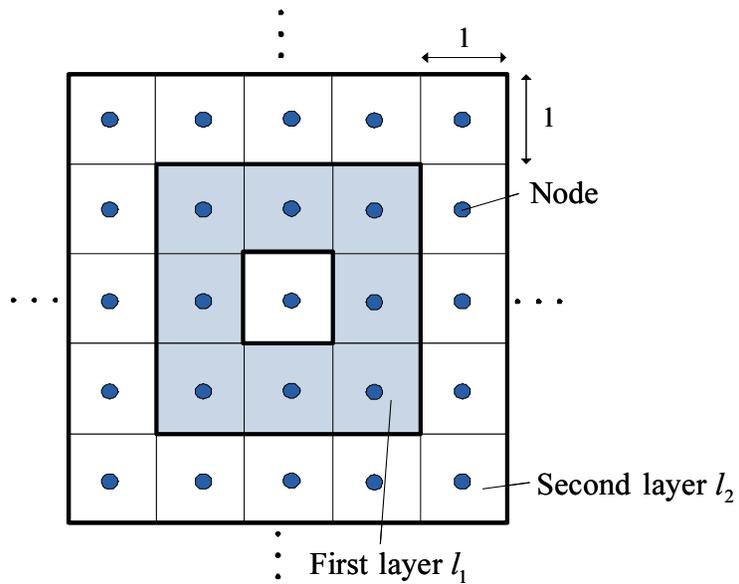}
  \caption{Grouping of interference routing cells in extended networks. The first layer $l_1$ represents the outer 8 shaded cells.}
  \label{FIG:layer}
  \end{center}
\end{figure}

\begin{figure}[t!]
  \begin{center}
  \leavevmode \epsfxsize=0.43\textwidth   
  \leavevmode 
  \epsffile{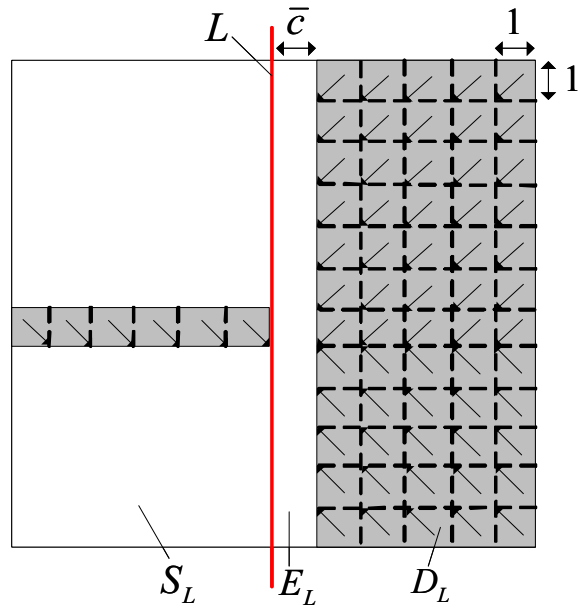}
  \caption{The node displacement to square vertices, indicated by arrows. The empty zone $E_L$ with width constant $\bar{c}$ is assumed for simplicity.}
  \label{FIG:displacement2}
  \end{center}
\end{figure}

\begin{figure}[t!]
  \begin{center}
  \leavevmode \epsfxsize=0.62\textwidth   
  \leavevmode 
  \epsffile{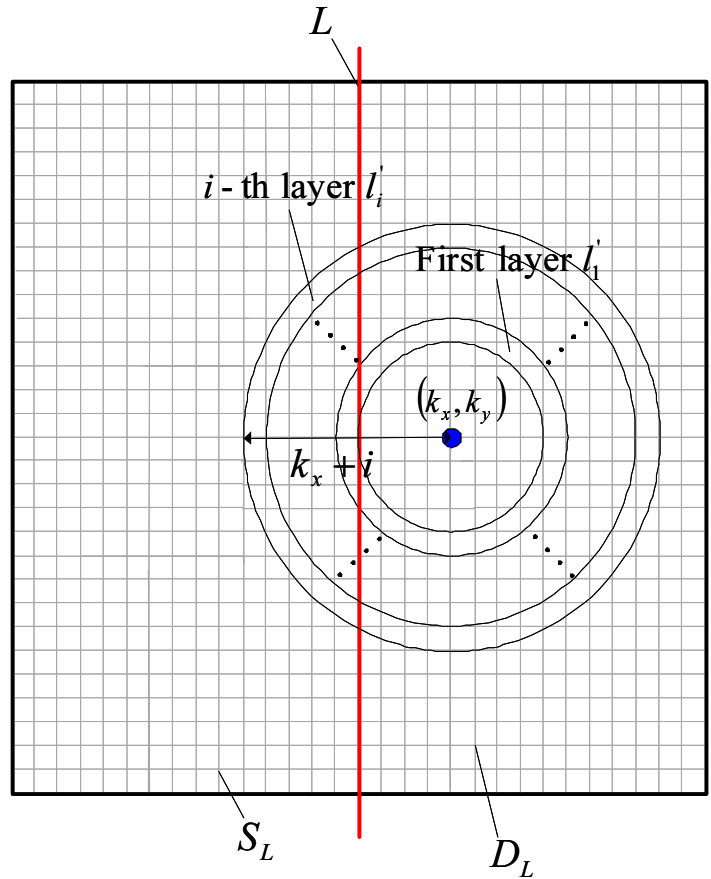}
  \caption{Grouping of source nodes in extended networks. There exist $\Theta(k_x)$ nodes in the first layer $l_1'$. This figure indicates the case where one destination is located at the position $(k_x,k_y)$. The source nodes are regularly placed at spacing 1 on the left half of the cut $L$.}
  \label{FIG:displacement}
  \end{center}
\end{figure}

\end{document}